\newtheorem{theorem}{Theorem}
\newtheorem{remark}{Remark}
\begin{document}


\title{Integrable semi-discretization of a multi-component short pulse equation}



\author{Bao-Feng Feng}
\affiliation{Department of Mathematics,
The University of Texas-Pan American,
Edinburg, TX 78541}
\author{Ken-ichi Maruno}
\affiliation{Department of Applied Mathematics,
Waseda University, Tokyo 169-8050, Japan}
\author{Yasuhiro Ohta}
\affiliation{Department of Mathematics,
Kobe University, Rokko, Kobe 657-8501, Japan}

\begin{abstract}
 In the present paper, we mainly study the integrable semi-discretization of
a multi-component short pulse equation. Firstly, we briefly review the
bilinear equations for a multi-component short pulse equation proposed by
Matsuno (J. Math. Phys. \textbf{52} 123705) and reaffirm its $N$-soliton
solution in terms of pfaffians. Then by using a B\"{a}cklund transformation
of the bilinear equations and defining a discrete hodograph (reciprocal)
transformation, an integrable semi-discrete multi-component short pulse
equation is constructed. Meanwhile, its $N$-soliton solution in terms of
pfaffians is also proved.
\end{abstract}

\pacs{02.30.Ik, 05.45.Yv}

\keywords{multi-component short pulse equation; Hirota's bilinear method; integrable discretization;
Pfaffian solution;  complex short pulse equation}

\maketitle 

\section{Introduction}
The nonlinear Schr\"{o}dinger (NLS) equation, as one of the universal
equations that describe the evolution of slowly varying packets of
quasi-monochromatic waves in weakly nonlinear dispersive media, has been
very successful in many applications such as nonlinear optics and water
waves \cite{Kodamabook, Agrawalbook,Boydbook,Yarivbook}.
The NLS equation is integrable and can be solved by the inverse scattering transform \cite{Zakharov}.
However, in the regime of ultra-short pulses where the width of optical
pulse is in the order of femtosecond ($10^{-15}$ s), the NLS equation
becomes less accurate \cite{Rothenberg}. Description of ultra-short
processes requires a modification of going beyond the standard slow varying
envelope approximation (SVEA) . Recently, Sch\"{a}fer and Wayne derived a
short pulse (SP) equation
\begin{equation}
u_{xt}=u+\frac{1}{6}(u^{3})_{xx}  \label{SPE}
\end{equation}%
in attempting to describe the propagation of ultra-short optical pulses in
nonlinear media \cite{SPE_Org}. Here, $u=u(x,t)$ is a real-valued function,
representing the magnitude of the electric field, the subscripts $t$ and $x$
denote partial differentiation. It has been shown that the SP equation
performs better than NLS under this case \cite{SPE_CJSW}.

Apart from the context of nonlinear optics, the SP equation has also been
derived as an integrable differential equation associated with
pseudospherical surfaces \cite{Robelo}. The SP equation has been shown to be
completely integrable \cite{Robelo,Beals,Sakovich,Brunelli1,Brunelli2}. The
loop soliton solutions as well as smooth soliton solutions of the SP
equation were found in \cite{Sakovich2,Kuetche}. The connection between the
SP equation and the sine-Gordon equation through the hodograph
transformation was clarified, and then the $N$-soliton solutions including
multi-loop and multi-breather ones were given in \cite%
{Matsuno_SPE,Matsuno_SPEreview} by using Hirota's bilniear method \cite{Hirota}.
An integrable discretization
of the SP equation was constructed by the authors in \cite{SPE_discrete1}, and its geometric interpretation
was given in \cite{SPE_discrete2}.

A major simplification made in the derivation of the short pulse equation is
to assume that the polarization is preserved during its propagating inside
an optical fiber. However, this is not always the case in practice. For example,
we have to take into account the effects of polarization or anisotropy
in birefringent fibers \cite{Kartashov}.  Therefore, an extension to a
two-component version of the short pulse equation is needed in order to describe the
propagation of ultra-short pulse in birefringent fibers. In fact, several
integrable coupled short pulse have been proposed in the literature \cite%
{PKB_CSPE,Hoissen_CSPE,Matsuno_CSPE,Feng_CSPE,ZengYao_CSPE,comSPE}. The
bi-Hamiltonian structures for several coupled short pulse equations were
obtained in \cite{Brunelli_CSPE}.

In the present paper, we are concerned with the integrable semi-discretization of a multi-component short
pulse (MCSP) equation
\begin{equation}  \label{CSPE}
u_{i,xt} = u_i + \frac 12 \left( \left(\sum_{1 \le j < k \le n} c_{jk} u_j
u_k\right) u_{i,x} \right)_x\,, \quad i= 1, 2, \cdots, n\,,
\end{equation}
where the coefficients $c_{jk}$ are arbitrary constants with symmetry $%
c_{jk} = c_{kj}$. Eq. (\ref{CSPE}) was proposed by Matsuno \cite{Matsuno_CSPE} through Hirota's bilinear method, meanwhile, multi-soliton solution was given as well.

The remainder of the present paper is organized as follows. In Section 2, the
MCSP equation is briefly reviewed. We provide its $N$-soliton solution in an
alternative pfaffian form and prove it by the
pfaffian technique. In Section 3, by using a B\"{a}cklund transformation of
the bilinear equations and defining a discrete hodograph transformation, we
construct a semi-discrete analogue of the MCSP equation.
Meanwhile, $N$-soliton solution in terms of pfaffian is provided and proved.
In Section 4, we investigate in detail the one- and two-soliton solutions to
the semi-discrete complex short pulse equation, which can be reduced from the MCSP equation.
The paper is concluded by several remarks in Section 5.

\section{Review of the multi-component short pulse equation and its
multi-soliton solution}

It was shown by Matsuno in \cite{Matsuno_SPE} that the SP equation (%
\ref{SPE}) is derived from bilinear equations
\begin{equation}
\left\{
\begin{array}{l}
\displaystyle D_{s}D_{y}\tilde{f}\cdot \tilde{f}=\frac{1}{2}\left( \tilde{f}%
^{2}-{\tilde{f^{\prime }}}^{2}\right) \,, \\[5pt]
\displaystyle D_{s}D_{y}\tilde{f}^{\prime }\cdot \tilde{f}^{\prime }=\frac{1%
}{2}\left( {\tilde{f^{\prime }}}^{2}-\tilde{f}^{2}\right) \,,%
\end{array}%
\right.   \label{SPE_bilinear1}
\end{equation}%
through transformations
\begin{equation}
u=2\mathrm{{i}}\left( \ln \frac{\tilde{f}^{\prime }}{\tilde{f}}\right)
_{s},\quad x=y-2(\ln \tilde{f}\tilde{f}^{\prime })_{s}\,,\quad t=s\,.
\label{hodograph1}
\end{equation}%
Here $D$ is called Hirota $D$-operator defined by
\[
D_{s}^{n}D_{y}^{m}f\cdot g=\left( \frac{\partial }{\partial s}-\frac{%
\partial }{\partial s^{\prime }}\right) ^{n}\left( \frac{\partial }{\partial
y}-\frac{\partial }{\partial y^{\prime }}\right) ^{m}f(y,s)g(y^{\prime
},s^{\prime })|_{y=y^{\prime },s=s^{\prime }}\,.
\]%
Recently, in view of the fact that the SP equation (\ref{SPE}) can also be
derived from another set of bilinear equations
\begin{equation}
\left\{
\begin{array}{l}
\displaystyle D_{s}D_{y}f\cdot g=fg\,, \\[5pt]
\displaystyle D_{s}^{2}f\cdot f=\frac{1}{2}g^{2}\,,%
\end{array}%
\right.   \label{SPE_bilinear2}
\end{equation}%
through transformations
\begin{equation}
u=\frac{g}{f},\quad x=y-2(\ln f)_{s}\,,\quad t=s\,,  \label{hodograph2}
\end{equation}%
Matsuno \cite{Matsuno_CSPE} constructed a multi-component generalization of the short pulse
equation (\ref{SPE}) based on a multi-component generalization of bilinear
equations (\ref{SPE_bilinear2}), which reads
\begin{equation}
\left\{
\begin{array}{l}
\displaystyle D_{s}D_{y}f\cdot g_{i}=fg_{i}\,,\quad i=1,2,\cdots ,n\,, \\%
[5pt]
\displaystyle D_{s}^{2}f\cdot f=\frac{1}{2}\sum_{1\leq j<k\leq
n}c_{jk}g_{j}g_{k}\,.%
\end{array}%
\right.   \label{CSP_Bilinear}
\end{equation}%

\begin{remark}
The set of bilinear equations (\ref{SPE_bilinear1}) is actually obtained
from a 2-reduction of the KP-Toda hierarchy,
which basically delivers only two tau-functions out of a sequence of the tau-functions.  Furthermore, when these
two tau-functions are made complex conjugate to each other, the bilinear
equations (\ref{SPE_bilinear1}) is converted into the sine-Gordon equation $%
\phi _{ys}=\sin \phi $ via a transformation $\phi =2\mathrm{{i}\ln ({\tilde{f%
}^{\prime }}/{\tilde{f}})}$, which is further converted into the SP equation (\ref{SPE})
by a hodograph transformation.
\end{remark}

\begin{remark}
In \cite{Hirota_Ohta_sG}, Hirota and one of the authors have shown that both
the bilinear equations (\ref{SPE_bilinear1}) and (\ref{SPE_bilinear2})
derive the sine-Gordon equation. Furthermore, the relations between tau-functions, which read
\begin{equation}
f= {\tilde f^{\prime }}{\tilde f}, \quad g=2\mathrm{{i}} D_s {\tilde f^{\prime }}%
\cdot {\tilde f}\,,
\end{equation}
were also presented.
\end{remark}

\begin{remark}
As mentioned previously, Eqs. (\ref{SPE_bilinear1}) originate from 2-reduction of single component KP-Toda hierarchy, whereas, Eqs. (\ref{SPE_bilinear2}) come from $(1+1)$-reduction of two-component KP-Toda hierarchy. Since they both belong to $A^{(1)}_1$ of the Euclidean Lie algebra \cite{JM},  it is natural that they both derive the SP equation. However, the latter can be easily extended to $(1+ \cdots + 1)$-reduction of multi-component KP-Toda hierarchy, which gives rise to the multi-component generalization of the short pulse equation.
\end{remark}

In what follows, we will briefly review how the bilinear equations (\ref%
{CSP_Bilinear})
determines a multi-component generalization of the SP
equation(\ref{SPE}). Dividing both sides  by $f^{2}$, the bilinear
equations (\ref{CSP_Bilinear}) can be cast into
\begin{equation}
\left\{
\begin{array}{l}
\displaystyle\left( \frac{g_{i}}{f}\right) _{sy}+2\frac{g_{i}}{f}\left( \ln
f\right) _{sy}=\frac{g_{i}}{f}\,,\quad i=1,2,\cdots ,n\,, \\[5pt]
\displaystyle\left( \ln f\right) _{ss}=\frac{1}{4}\sum_{1\leq j<k\leq
n}c_{jk}\frac{g_{j}g_{k}}{f^{2}}\,.%
\end{array}%
\right.   \label{CSP_BL2}
\end{equation}%
Introducing a hodograph transformation
\begin{equation}
x=y-2(\ln f)_{s}\,,\quad t=s\,,  \label{CSP_hodograph}
\end{equation}%
and a dependent variable transformation
\begin{equation}
u_{i}=\frac{g_{i}}{f}\,,\quad (i=1,2,\cdots ,n),  \label{CSP_transformation}
\end{equation}%
we then have
\[
\frac{\partial x}{\partial s}=-2(\ln f)_{ss}=-\frac{1}{2}\sum_{1\leq j<k\leq
n}c_{jk}u_{j}u_{k}\,,\qquad \frac{\partial x}{\partial y}=1-2(\ln f)_{sy}\,,
\]%
which implies
\begin{equation}\label{cov-relation}
  {\partial _{y}}=\rho ^{-1}{\partial _{x}}\,,\qquad {\partial _{s}}={\partial_t}-\frac{1}{2}\left( \sum_{1\leq j<k\leq n}c_{jk}u_{j}u_{k}\right) {\partial
_{x}}\,
\end{equation}
by letting $1-2(\ln f)_{sy}=\rho ^{-1}$.

Notice that the first equation in (\ref{CSP_BL2}) can be rewritten as
\[
\left(\frac{g_i}{f} \right)_{sy} = \left(1-2(\ln f)_{sy} \right) \frac{g_i}{f%
}\,,
\]
or
\[
\rho \left(\frac{g_i}{f} \right)_{sy} = \frac{g_i}{f}\,,
\]
which is converted into
\begin{equation}  \label{CPE1}
\partial_x \left(\partial_t - \frac 12 \left(\sum_{1 \le j < k \le n} c_{jk}
u_j u_k\right)\partial_x \right)u_i = u_i\,,
\end{equation}
by the conversion relation (\ref{cov-relation}). Obviously, Eq. (\ref{CPE1}) is nothing but the MCSP (\ref{CSPE}).

Next, we give an alternative representation of the $N$-solution to
the MCSP equation (\ref{CSPE}) in the form of pfaffians. To this end,
let us define a class of set $B_{\mu }$, $\mu =1,2,\cdots ,n$, which
satisfies the following condition,
\[
B_{\mu }\cap B_{\nu }=\emptyset ,\ \ \mathrm{if}\ \ \mu \neq \nu ,\qquad
\cup _{\mu =1}^{n}B_{\mu }=\{b_{1},b_{2},\cdots ,b_{2N}\}.
\]%
Then we define the elements of the pfaffians (others not mentioned below are
zeros)
\begin{equation}
\mathrm{Pf}(a_{j},a_{k})=\frac{p_{j}-p_{k}}{p_{j}+p_{k}}e^{\xi _{j}+\xi
_{k}}\,,\quad \mathrm{Pf}(a_{j},b_{k})=\delta _{j,k}\,,  \label{CSPE_pf1}
\end{equation}%
\begin{equation}
\mathrm{Pf}(b_{j},b_{k})=\frac{1}{4}\frac{c_{\mu \nu }}{p_{j}^{-2}-p_{k}^{-2}%
}\,,\quad (b_{j}\in B_{\mu },b_{k}\in B_{\nu })\,,  \label{CSPE_pf2}
\end{equation}%
\begin{equation}
\mathrm{Pf}(d_{l},a_{k})=p_{k}^{l}e^{\xi _{k}}\,,  \label{CSPE_pf3}
\end{equation}

\begin{equation}  \label{CSPE_pf4}
\mathrm{Pf}(b_j,\beta_\mu)= \left\{
\begin{array}{ll}
1 & \quad b_j \in B_\mu \\
0 & \quad b_j \notin B_\mu%
\end{array}%
\right.\,.
\end{equation}

Here $j,k=1, 2, \cdots, 2N$, $\mu, \nu= 1, 2, \cdots, n$, $\xi_j=p_j y +
p_j^{-1} s + \xi_{i0}$ and $l$ is an integer. By defining the elements of the pfaffians, we can give the pfaffian solutions satisfying bilinear equations (\ref{CSP_Bilinear}).

\begin{theorem}
The bilinear equations (\ref{CSP_Bilinear}) have the following pfaffian solution
\begin{eqnarray}
f &=& \mathrm{Pf} (a_1, \cdots, a_{2N}, b_1, \cdots, b_{2N})\,, \\
g_i &=& \mathrm{Pf} (d_0, \beta_i, a_1, \cdots, a_{2N}, b_1, \cdots,
b_{2N})\,,
\end{eqnarray}
where $i=1,2, \cdots, n$ and the elements of the pfaffians are given in Eqs.
(\ref{CSPE_pf1})--(\ref{CSPE_pf4}).
\end{theorem}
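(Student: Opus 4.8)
The plan is to verify the two bilinear equations of (\ref{CSP_Bilinear}) by direct substitution, reducing each of them to an algebraic identity for pfaffians. First I would establish the differentiation rules. Since $\xi_j=p_jy+p_j^{-1}s+\xi_{j0}$, the only $y$- and $s$-dependence of the entries (\ref{CSPE_pf1})--(\ref{CSPE_pf4}) lives in $\mathrm{Pf}(a_j,a_k)$ and $\mathrm{Pf}(d_l,a_k)$, and one checks at once that $\partial_y\mathrm{Pf}(d_l,a_k)=\mathrm{Pf}(d_{l+1},a_k)$, $\partial_s\mathrm{Pf}(d_l,a_k)=\mathrm{Pf}(d_{l-1},a_k)$, while the coefficient $(p_j-p_k)/(p_j+p_k)$ in (\ref{CSPE_pf1}) is precisely the one that makes
\[ \partial_y\mathrm{Pf}(a_j,a_k)=\mathrm{Pf}(d_1,a_j)\mathrm{Pf}(d_0,a_k)-\mathrm{Pf}(d_0,a_j)\mathrm{Pf}(d_1,a_k), \]
\[ \partial_s\mathrm{Pf}(a_j,a_k)=\mathrm{Pf}(d_0,a_j)\mathrm{Pf}(d_{-1},a_k)-\mathrm{Pf}(d_{-1},a_j)\mathrm{Pf}(d_0,a_k). \]
Feeding these into the pfaffian analogue of Jacobi's differentiation formula and abbreviating $\bullet=(a_1,\dots,a_{2N},b_1,\dots,b_{2N})$, one obtains, up to fixed overall signs,
\[ f_y=\mathrm{Pf}(d_0,d_1,\bullet),\quad f_s=\mathrm{Pf}(d_{-1},d_0,\bullet),\quad f_{sy}=\mathrm{Pf}(d_{-1},d_1,\bullet),\quad f_{ss}=\mathrm{Pf}(d_{-2},d_0,\bullet), \]
and, starting from $g_i=\mathrm{Pf}(d_0,\beta_i,\bullet)$,
\[ g_{i,y}=\mathrm{Pf}(d_1,\beta_i,\bullet),\quad g_{i,s}=\mathrm{Pf}(d_{-1},\beta_i,\bullet),\quad g_{i,sy}=g_i+\mathrm{Pf}(d_{-1},d_0,d_1,\beta_i,\bullet). \]
The delicate feature already at this stage is that whenever a derivative would reintroduce a differential symbol $d_m$ already present in the pfaffian, the whole corresponding block of terms disappears because a pfaffian with two identical columns vanishes; this is exactly what collapses the expansions to the formulas above.

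Next, for the first bilinear equation I would substitute into $D_sD_yf\cdot g_i-fg_i=f_{sy}g_i-f_sg_{i,y}-f_yg_{i,s}+fg_{i,sy}-fg_i$. The two copies of $fg_i$ cancel against the $g_i$ occurring inside $g_{i,sy}$, and what remains is
\[ \mathrm{Pf}(d_{-1},d_1,\bullet)\mathrm{Pf}(d_0,\beta_i,\bullet)-\mathrm{Pf}(d_{-1},d_0,\bullet)\mathrm{Pf}(d_1,\beta_i,\bullet)-\mathrm{Pf}(d_0,d_1,\bullet)\mathrm{Pf}(d_{-1},\beta_i,\bullet)+f\,\mathrm{Pf}(d_{-1},d_0,d_1,\beta_i,\bullet), \]
which (once the signs are pinned down) is exactly an instance of the bilinear pfaffian identity
\[ \mathrm{Pf}(e_1,e_2,e_3,e_4,B)\,\mathrm{Pf}(B)=\mathrm{Pf}(e_1,e_2,B)\mathrm{Pf}(e_3,e_4,B)-\mathrm{Pf}(e_1,e_3,B)\mathrm{Pf}(e_2,e_4,B)+\mathrm{Pf}(e_1,e_4,B)\mathrm{Pf}(e_2,e_3,B) \]
with $(e_1,e_2,e_3,e_4)=(d_{-1},d_0,d_1,\beta_i)$ and $B=\bullet$, and therefore vanishes identically. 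So this equation is essentially mechanical once the derivative formulas are in hand.

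The second equation carries the real content. Writing $D_s^2f\cdot f=2(ff_{ss}-f_s^2)=2\bigl[\mathrm{Pf}(\bullet)\mathrm{Pf}(d_{-2},d_0,\bullet)-\mathrm{Pf}(d_{-1},d_0,\bullet)^2\bigr]$, the assertion to be proved is $ff_{ss}-f_s^2=\tfrac14\sum_{1\le j<k\le n}c_{jk}g_jg_k$. I would again process the left-hand side with the same pfaffian identity, but the substantive task is to match the result with the right-hand side, and this is exactly where the partition $\{b_1,\dots,b_{2N}\}=\sqcup_{\mu=1}^{n}B_\mu$, the dual symbols $\beta_\mu$, and the rational entry $\mathrm{Pf}(b_\alpha,b_\beta)=\tfrac14 c_{\mu\nu}/(p_\alpha^{-2}-p_\beta^{-2})$ for $b_\alpha\in B_\mu,\ b_\beta\in B_\nu$ come in: after expanding the pfaffians along the $b$'s, each surviving term carries a factor $\mathrm{Pf}(b_\alpha,b_\beta)$ together with a compensating factor $p_\alpha^{-2}-p_\beta^{-2}$ produced by the two $s$-derivatives, so that the product reduces to $\tfrac14 c_{\mu\nu}$; resumming over pairs of blocks and recognizing the accompanying pfaffians as the $g_\mu$'s reassembles $\tfrac14\sum_{\mu<\nu}c_{\mu\nu}g_\mu g_\nu$, the intra-block ($\mu=\nu$) contributions cancelling by antisymmetry in the two $b$-labels so that no spurious $c_{\mu\mu}g_\mu^2$ survives.

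The main obstacle I expect is this last step for the second equation: organizing the expansion of $ff_{ss}-f_s^2$ so that the $b$-$b$ entries recombine \emph{exactly} into $\tfrac12\sum_{1\le j<k\le n}c_{jk}g_jg_k$ with the correct constant and with the diagonal terms provably absent; the rational form of $\mathrm{Pf}(b_\alpha,b_\beta)$ is engineered precisely for this telescoping, and checking it is the crux of the proof. A secondary, purely clerical burden throughout is fixing all the signs in the iterated pfaffian expansions and in the matching of the bilinear pfaffian identity.
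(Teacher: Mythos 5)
Your proposal is correct and follows essentially the same route as the paper: the same entry derivative rules, the expressions $f_y=\mathrm{Pf}(d_0,d_1,\cdots)$, $f_s=\mathrm{Pf}(d_{-1},d_0,\cdots)$, $g_{i,sy}=g_i+\mathrm{Pf}(d_{-1},\beta_i,d_0,d_1,\cdots)$, the four-term pfaffian expansion identity with $(d_{-1},d_0,d_1,\beta_i)$ for the first bilinear equation, and for the second the same telescoping of $\mathrm{Pf}(b_i,b_j)=\tfrac14 c_{\mu\nu}/(p_i^{-2}-p_j^{-2})$ against the $p^{-2}$ factors produced by the $s$-derivatives so that $\tfrac12\sum c_{\mu\nu}g_\mu g_\nu$ is reassembled. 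The only minor difference is that the paper treats $D_s^2f\cdot f$ not with that four-term identity but by expanding the vanishing pfaffians $\mathrm{Pf}(b_i,d_0,\cdots)$ and $\mathrm{Pf}(a_i,d_0,\cdots)$ and cancelling the $p_i^{-1}p_j^{-1}$ cross term by antisymmetry, which is precisely the computation you flagged as the crux, so your plan matches the paper in substance.
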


\begin{proof}
Since
\[
\frac{\partial} {\partial y} \mathrm{Pf} (a_j,a_k)= (p_j -
p_k)e^{\xi_j+\xi_k} =  \mathrm{Pf} (d_0, d_1, a_j,a_k)\,,
\]

\[
\frac{\partial} {\partial s} \mathrm{Pf} (a_j,a_k)= (p^{-1}_k - p^{-1}_j)
e^{\xi_j+\xi_k}  = \mathrm{Pf} (d_{-1}, d_0, a_j,a_k)\,,
\]

\[
\frac{\partial^2} {\partial s^2} \mathrm{Pf} (a_j,a_k)= (p^{-2}_k -
p^{-2}_j) e^{\xi_j+\xi_k}  = \mathrm{Pf} (d_{-2}, d_0, a_j,a_k)\,,
\]
\[
\frac{\partial^2} {\partial y \partial s}\mathrm{Pf} (a_j,a_k)= (p_jp^{-1}_k
- p_k p^{-1}_j) e^{\xi_j+\xi_k} =  \mathrm{Pf} (d_{-1}, d_1, a_i,a_j)\,,
\]
where $\mathrm{Pf} (d_{l}, d_m)=0$ for integers $l$ and $m$, we then have
\[
\frac{\partial f} {\partial y} = \mathrm{Pf} (d_0, d_1, \cdots)\,,
\]

\[
\frac{\partial f} {\partial s} = \mathrm{Pf} (d_{-1}, d_0, \cdots)\,,
\]

\[
\frac{\partial^2 f} {\partial s^2} = \mathrm{Pf} (d_{-2}, d_0, \cdots)\,,
\]

\[
\frac{\partial^2 f} {\partial y \partial s} = \mathrm{Pf} (d_{-1}, d_1,
\cdots)\,.
\]
Here $\mathrm{Pf} (d_0, d_1, a_1, \cdots, a_{2N}, b_1, \cdots, b_{2N})$ is
abbreviated by $\mathrm{Pf} (d_0, d_1, \cdots)$, so as other similar
pfaffians.

Furthermore, it can be shown
\begin{eqnarray*}
&& \frac{\partial g_i} {\partial y} = \frac{\partial} {\partial y} \left[%
\sum_{j=1}^{2N} (-1)^{j} \mathrm{Pf} (d_0, a_j) \mathrm{Pf} (\beta_i,
\cdots ,\hat{a}_j, \cdots)\right] \\
&& =\sum_{j=1}^{2N} (-1)^{j} \left[ \left( {\partial_y} \mathrm{Pf} (d_0,
a_j) \right) \mathrm{Pf} (\beta_i, \cdots ,\hat{a}_j, \cdots) + \mathrm{Pf}
(d_0, a_j) {\partial_y} \mathrm{Pf} (\beta_i, \cdots ,\hat{a}_j, \cdots) %
\right] \\
&& =\sum_{j=1}^{2N} (-1)^{j} \left[ \mathrm{Pf} (d_1, a_j) \mathrm{Pf}
(\beta_i, \cdots ,\hat{a}_j, \cdots) + \mathrm{Pf} (d_0, a_j) \mathrm{Pf}
(\beta_i, d_0, d_1, \cdots ,\hat{a}_j, \cdots) \right] \\
&& = \mathrm{Pf} (d_1, \beta_i, \cdots)+ \mathrm{Pf} ( d_0, \beta_i, d_0,
d_1, \cdots) \\
&& = \mathrm{Pf} (d_1, \beta_i,  \cdots)\,.
\end{eqnarray*}

Here $\hat{a}_j$ means that the index $j$ is omitted. Similarly, we can show
\[
\frac{\partial g_i} {\partial s} = \mathrm{Pf} (d_{-1}, \beta_i, \cdots)\,,
\]

\begin{eqnarray*}
&& \frac{\partial^2 g_i} {\partial y \partial s} = \frac{\partial} {\partial
y} \left[\sum_{j=1}^{2N} (-1)^{j} \mathrm{Pf} (d_{-1}, a_j) \mathrm{Pf}
(\beta_i, \cdots ,\hat{a}_j, \cdots)\right] \\
&& =\sum_{j=1}^{2N} (-1)^{j} \left[ \left( {\partial_y} \mathrm{Pf}
(d_{-1}, a_j) \right) \mathrm{Pf} (\beta_i, \cdots ,\hat{a}_j, \cdots) +
\mathrm{Pf} (d_{-1}, a_j) {\partial_y} \mathrm{Pf} (\beta_i, \cdots ,\hat{a}%
_j, \cdots) \right] \\
&& =\sum_{j=1}^{2N} (-1)^{j} \left[ \mathrm{Pf} (d_0, a_j) \mathrm{Pf}
(\beta_i, \cdots ,\hat{a}_j, \cdots) + \mathrm{Pf} (d_{-1}, a_j) \mathrm{Pf}
(\beta_i, d_0, d_1, \cdots ,\hat{a}_j, \cdots) \right] \\
&& = \mathrm{Pf} ( d_0, \beta_i,\cdots)+ \mathrm{Pf} (d_{-1}, \beta_i, d_0,
d_1, \cdots)\,.
\end{eqnarray*}

An algebraic identity of pfaffian \cite{Hirota}
\begin{eqnarray*}
&& \mathrm{Pf} (d_{-1}, \beta_i, d_0, d_1, \cdots) \mathrm{Pf} (\cdots)=
\mathrm{Pf} (d_{-1}, d_0, \cdots) \mathrm{Pf} (d_1, \beta_i,  \cdots) \\
&& \quad - \mathrm{Pf} (d_{-1}, d_1, \cdots) \mathrm{Pf} ( d_0, \beta_i,
\cdots) + \mathrm{Pf} ( d_{-1}, \beta_i, \cdots) \mathrm{Pf} (d_0, d_1,
\cdots)\,,
\end{eqnarray*}
implies
\[
( {\partial_s} {\partial_y} g_i-g_i) \times f = {\partial_s} f \times {%
\partial_y} g_i - {\partial_s} {\partial_y} f \times g_i +  {\partial_s} g_i
\times {\partial_y} f \,,
\]
which is actually the first bilinear equation.

The second bilinear equation is proved in a similar way as the one used by Iwao and
Hirota \cite{IwaoHirota} in connection with a different system. We start from the r.h.s of the bilinear equation.
\begin{eqnarray}  \label{CSP1_proof1}
&& \frac 12 \sum_{1 \le \mu < \nu \le n} c_{\mu \nu} g_{\mu} g_{\nu}
\nonumber \\
&& = \frac 14 \sum_{1 \le \mu,\nu \le n} c_{\mu \nu} \mathrm{Pf} (d_0, \beta_\mu,
 \cdots) \mathrm{Pf} (d_0, \beta_\nu,  \cdots)  \nonumber \\
&& = \frac 14 \sum_{1 \le \mu,\nu \le n} c_{\mu \nu } \sum_{i,j}^{2N}
(-1)^{i+j}\mathrm{Pf} (\beta_\mu, b_i) \mathrm{Pf} (d_0, \cdots,\hat{b}_i,
\cdots) \mathrm{Pf} (\beta_\nu, b_j) \mathrm{Pf} (d_0, \cdots,\hat{b}_j,
\cdots)  \nonumber \\
&& = \sum_{i,j}^{2N} (-1)^{i+j} \sum_{1 \le \mu,\nu \le n} \frac 14 c_{\mu
\nu } \mathrm{Pf}(\beta_\mu, b_i) \mathrm{Pf}(\beta_\nu, b_j) \mathrm{Pf}%
(d_0, \cdots,\hat{b}_i, \cdots) \mathrm{Pf} (d_0, \cdots,\hat{b}_j, \cdots)
\nonumber \\
&& = \sum_{i,j}^{2N} (-1)^{i+j} \left(p_i^{-2}-p_{j}^{-2}\right) \mathrm{Pf}
(b_i,b_j)\mathrm{Pf}(d_0, \cdots,\hat{b}_i, \cdots) \mathrm{Pf} (d_0, \cdots,%
\hat{b}_j, \cdots)  \nonumber \\
\end{eqnarray}
Next, the expansion of the vanishing pfaffian $\mathrm{Pf} (b_i, d_0, \cdots)
$ on $b_i$ yields
\[
\sum_{j=1}^{2N} (-1)^{i+j}\mathrm{Pf} (b_i, b_j) \mathrm{Pf} (d_0, \cdots,
\hat{b}_j, \cdots) = \mathrm{Pf} (d_0, \cdots,\hat{a}_i, \cdots)\,,
\]
which subsequently leads to
\begin{eqnarray}
&& \sum_{i,j}^{2N} (-1)^{i+j} p_i^{-2} \mathrm{Pf} (b_i,b_j)\mathrm{Pf}(d_0,
\cdots,\hat{b}_i, \cdots) \mathrm{Pf} (d_0, \cdots,\hat{b}_j, \cdots)
\nonumber \\
&& = \sum_{i}^{2N} p_i^{-2} \mathrm{Pf} (d_0, \cdots,\hat{a}_i, \cdots)
\mathrm{Pf} (d_0, \cdots,\hat{b}_i, \cdots)\,.  \label{CSP1_proof2}
\end{eqnarray}
Similarly, we can show
\begin{eqnarray}
&& -\sum_{i,j}^{2N} (-1)^{i+j} p_j^{-2} \mathrm{Pf} (b_i,b_j)\mathrm{Pf}%
(d_0, \cdots,\hat{b}_i, \cdots) \mathrm{Pf} (d_0, \cdots,\hat{b}_j, \cdots)
\nonumber \\
&& = \sum_{j}^{2N} p_j^{-2} \mathrm{Pf} (d_0, \cdots,\hat{a}_j, \cdots)
\mathrm{Pf} (d_0, \cdots,\hat{b}_j, \cdots)\,.  \label{CSP1_proof3}
\end{eqnarray}
Substituting Eqs. (\ref{CSP1_proof2})--(\ref{CSP1_proof2}) into Eq. (\ref%
{CSP1_proof1}), we arrive at
\begin{equation}  \label{CSP1_proof4}
\frac 12 \sum_{1 \le \mu < \nu \le n} c_{\mu \nu} g_{\mu} g_{\nu} =
2\sum_{i}^{2N} p_i^{-2} \mathrm{Pf} (d_0, \cdots,\hat{a}_i, \cdots) \mathrm{%
Pf} (d_0, \cdots,\hat{b}_i, \cdots)\,.
\end{equation}
Now we work on the l.h.s. of the second bilinear equation
\begin{eqnarray}
&& \frac{\partial^2 f} {\partial s^2} \times 0 - \frac{\partial f} {\partial
s} \frac{\partial f} {\partial s}  \nonumber \\
&& = \mathrm{Pf} (d_{-2}, d_0, \cdots) \mathrm{Pf} (d_{0}, d_0, \cdots) -
\mathrm{Pf} (d_{-1}, d_0, \cdots) \mathrm{Pf} (d_{-1}, d_0, \cdots)
\nonumber \\
&& = \sum_{i=1}^{2N} (-1)^i \mathrm{Pf} (d_{-2}, a_i) \mathrm{Pf} (d_0,
\cdots, \hat{a}_i, \cdots) \sum_{j=1}^{2N} (-1)^j \mathrm{Pf} (d_{0}, a_j)
\mathrm{Pf} (d_0, \cdots, \hat{a}_j, \cdots)  \nonumber \\
&& -\sum_{i=1}^{2N} (-1)^i \mathrm{Pf} (d_{-1}, a_i) \mathrm{Pf} (d_0,
\cdots, \hat{a}_i, \cdots) \sum_{j=1}^{2N} (-1)^j \mathrm{Pf} (d_{-1}, a_j)
\mathrm{Pf} (d_0, \cdots, \hat{a}_j, \cdots)  \nonumber \\
&& =\sum_{i,j=1}^{2N} (-1)^{i+j} \left[ \mathrm{Pf} (d_{-2}, a_i) \mathrm{Pf}
(d_{0}, a_j) -\mathrm{Pf} (d_{-1}, a_i) \mathrm{Pf} (d_{-1}, a_j) \right]
\nonumber \\
&& \quad \times \mathrm{Pf} (d_0, \cdots, \hat{a}_i, \cdots) \mathrm{Pf}
(d_0, \cdots, \hat{a}_j, \cdots)  \nonumber \\
&&=\sum_{i,j=1}^{2N} (-1)^{i+j+1} \left[p_i^{-2} + p_i^{-1}p_j^{-1} \right]
\mathrm{Pf} (a_i, a_j) \mathrm{Pf} (d_0, \cdots, \hat{a}_i, \cdots) \mathrm{%
Pf} (d_0, \cdots, \hat{a}_j, \cdots)   \nonumber
\end{eqnarray}
The summation over the second term within the bracket vanishes due to the
fact that
\begin{eqnarray*}
&& \sum_{i,j=1}^{2N} (-1)^{i+j+1} p_i^{-1}p_j^{-1} \mathrm{Pf} (a_i, a_j)
\mathrm{Pf} (d_0, \cdots, \hat{a}_i, \cdots) \mathrm{Pf} (d_0, \cdots, \hat{a%
}_j, \cdots) \\
&& = \sum_{j,i=1}^{2N} (-1)^{j+i+1} p_j^{-1}p_i^{-1} \mathrm{Pf} (a_j, a_i)
\mathrm{Pf} (d_0, \cdots, \hat{a}_j, \cdots) \mathrm{Pf} (d_0, \cdots, \hat{a%
}_i, \cdots) \\
&& = -\sum_{i,j=1}^{2N} (-1)^{i+j+1} p_i^{-1}p_j^{-1} \mathrm{Pf} (a_i, a_j)
\mathrm{Pf} (d_0, \cdots, \hat{a}_i, \cdots) \mathrm{Pf} (d_0, \cdots, \hat{a%
}_j, \cdots)\,.
\end{eqnarray*}
Therefore,
\begin{eqnarray}
&& - \frac{\partial f} {\partial s} \frac{\partial f} {\partial s}
=\sum_{i,j=1}^{2N} (-1)^{i+j+1} p_i^{-2} \mathrm{Pf} (a_i, a_j) \mathrm{Pf}
(d_0, \cdots, \hat{a}_i, \cdots) \mathrm{Pf} (d_0, \cdots, \hat{a}_j, \cdots)
\nonumber \\
&&=\sum_{i=1}^{2N} (-1)^{i+1} p_i^{-2} \mathrm{Pf} (d_0, \cdots, \hat{a}_i,
\cdots) \left[ \sum_{j=1}^{2N} (-1)^{j} \mathrm{Pf} (a_i, a_j) \mathrm{Pf}
(d_0, \cdots, \hat{a}_j, \cdots) \right]  \nonumber  \label{CSP1_proof5}
\end{eqnarray}
Further, we note that the following identity can be substituted into the
term within bracket
\begin{eqnarray*}
&& \sum_{j=1}^{2N} (-1)^{j} \mathrm{Pf} (a_i, a_j) \mathrm{Pf} (d_0, \cdots,
\hat{a}_j, \cdots) \\
&& = \mathrm{Pf} (d_{0}, a_i) \mathrm{Pf} (\cdots) + (-1)^{i+1} \mathrm{Pf}
(d_0, \cdots, \hat{b}_i, \cdots)\,
\end{eqnarray*}
which is obtained from the expansion of the following vanishing pfaffian $%
\mathrm{Pf} (a_i, d_0, \cdots)$ on $a_i$. Consequently, we have
\begin{eqnarray}
&& - \frac{\partial f} {\partial s} \frac{\partial f} {\partial s} =
\nonumber \\
&& \sum_{i=1}^{2N} (-1)^{i+1} p_i^{-2} \mathrm{Pf} (d_0, \cdots, \hat{a}_i,
\cdots) \left[\mathrm{Pf} (d_{0}, a_i) \mathrm{Pf} (\cdots) + (-1)^{i+1}
\mathrm{Pf} (d_0, \cdots, \hat{b}_i, \cdots)\right]\,,  \nonumber \\
&& = -\mathrm{Pf} (\cdots) \mathrm{Pf} (d_{-2}, d_0, \cdots)+
\sum_{i=1}^{2N} p_i^{-2} \mathrm{Pf} (d_0, \cdots, \hat{a}_i, \cdots)
\mathrm{Pf} (d_0, \cdots, \hat{b}_i, \cdots)\,,  \nonumber \\
&& = -\frac{\partial^2 f} {\partial s^2} f+ \frac 14 \sum_{1 \le \mu < \nu
\le n} c_{\mu \nu} g_{\mu} g_{\nu} \,,
\end{eqnarray}
which can be rewritten as
\begin{equation}
2\frac{\partial^2 f} {\partial s^2} f- 2\frac{\partial f} {\partial s} \frac{%
\partial f} {\partial s}= \frac 12 \sum_{1 \le \mu < \nu \le n} c_{\mu \nu}
g_{\mu} g_{\nu} \,.
\end{equation}
The above equation is nothing but the second bilinear equation. Thus, the
proof is complete.
\end{proof}
\section{Semi-discrete analogue of the multi-component short pulse equation}
In this section, we attempt to construct an integrable semi-discretization of the
MCSP equation (\ref{CSPE}). Firstly, we propose a
semi-discrete analogue of bilinear equations (\ref{CSP_Bilinear})
\begin{equation}
\left\{
\begin{array}{l}
\displaystyle\frac{1}{a}D_{s}(g_{k+1}^{(j)}\cdot f_{k}-g_{k}^{(j)}\cdot
f_{k+1})=g_{k+1}^{(j)}f_{k}+g_{k}^{(j)}f_{k+1}\,,\quad j=1,2,\cdots ,n\,, \\%
[5pt]
\displaystyle D_{s}^{2}f_{k}\cdot f_{k}=\frac{1}{2}\sum_{1\leq i<j\leq
n}c_{ij}g_{k}^{(i)}g_{k}^{(j)}\,.%
\end{array}%
\right.
\label{eq:sdBL1}
\end{equation}
By introducing a dependent variable transformation
\begin{equation}
u_{k}^{(i)}=\frac{g_{k}^{(i)}}{f_{k}},\quad i=1,2,\cdots ,n\,,
\label{eq:v-trf}
\end{equation}%
and a discrete version of the hodograph transformation
\begin{equation}
x_{k}=2ka-2(\ln f_{k})_{s},\quad t=s\,,
\label{eq:sd-hodtrf}
\end{equation}%
the second bilinear equation in (\ref{eq:sdBL1}) is rewritten as
\begin{equation}
\left( \ln f_{k}\right) _{ss}=\frac{1}{4}\sum_{1\leq i<j\leq
n}c_{ij} \frac{g_{k}^{(i)}g_{k}^{(j)}}{f_{k}^{2}}=\frac{1}{4}\sum_{1\leq i<j\leq
n}c_{ij}u_{k}^{(i)}u_{k}^{(j)}\,.  \label{sd_BL2}
\end{equation}%
From the discrete hodograph transformation, we can define an nonuniform mesh
\begin{equation}
\delta_k=x_{k+1}-x_{k}=2a-2\left( \ln \frac{f_{k+1}}{f_{k}}\right) _{s}\,,
\label{delta}
\end{equation}%
it then immediately follows
\begin{equation}
\frac{d\delta_k}{d\,s}=-\frac{1}{2}\sum_{1\leq i<j\leq n}c_{ij}\left(
u_{k+1}^{(i)}u_{k+1}^{(j)}-u_{k}^{(i)}u_{k}^{(j)}\right)   \label{sd1-CSP}
\end{equation}%
from Eq. (\ref{sd_BL2}). Next, dividing both sides by $f_{k+1}f_{k}$, the
first bilinear equation in (\ref{eq:sdBL1}) can be calculated out by
\[
\left( \frac{g_{k+1,s}^{(i)}}{f_{k+1}}-\frac{g_{k,s}^{(i)}}{f_{k}}\right) -%
\frac{g_{k+1}^{(i)}f_{k,s}-g_{k}^{(i)}f_{k+1,s}}{f_{k+1}f_{k}}=a\left( \frac{%
g_{k+1}^{(i)}}{f_{k+1}}+\frac{g_{k}^{(i)}}{f_{k}}\right) \,,
\]%
or
\[
\left( \frac{g_{k+1}^{(i)}}{f_{k+1}}-\frac{g_{k}^{(i)}}{f_{k}}\right)
_{s}+\left( \frac{g_{k+1}^{(i)}}{f_{k+1}}+\frac{g_{k}^{(i)}}{f_{k}}\right)
\left( \frac{f_{k+1,s}}{f_{k+1}}-\frac{f_{k,s}}{f_{k}}\right) =a\left( \frac{%
g_{k+1}^{(i)}}{f_{k+1}}+\frac{g_{k}^{(i)}}{f_{k}}\right) \,,
\]%
which is recast into
\begin{equation}
\left( \frac{g_{k+1}^{(i)}}{f_{k+1}}-\frac{g_{k}^{(i)}}{f_{k}}\right)
_{s}=\left( a-\left( \ln \frac{f_{k+1}}{f_{k}}\right) _{s}\right) \left(
\frac{g_{k+1}^{(i)}}{f_{k+1}}+\frac{g_{k}^{(i)}}{f_{k}}\right) \,.
\end{equation}%
With the use of Eqs. (\ref{eq:v-trf}) and (\ref{delta}), we finally arrive at
\begin{equation}
\frac{d(u_{k+1}^{(i)}-u_{k}^{(i)})}{ds}=\frac{1}{2}%
(x_{k+1}-x_{k})(u_{k+1}^{(i)}+u_{k}^{(i)})\,.  \label{sd2-CSP}
\end{equation}
Eqs. (\ref{sd1-CSP}), (\ref{sd2-CSP}) constitute the semi-discrete
analogue of the MCSP equation. We summarize the results by the following
Theorem.
\begin{theorem}
The bilinear equations (\ref{eq:sdBL1})
yield a semi-discrete multi-component short pulse equation
\begin{equation}
\left\{
\begin{array}{l}
\displaystyle \frac{d(u_{k+1}^{(j)}-u_{k}^{(j)})}{ds} =\frac{1}{2}%
\delta_k (u_{k+1}^{(j)}+u_{k}^{(j)})\,, \\
[5pt]
\displaystyle \frac{d \delta_k}{ds} =-\frac{1}{2}\sum_{1\leq i<j\leq n}c_{ij
}\left(u_{k+1}^{(i)}u_{k+1}^{(j)}-u_{k}^{(i)}u_{k}^{(j)}\right)\,.
\end{array}%
\right.
\label{eq:sd-MSP}
\end{equation}
through dependent variable transformation
\begin{equation*}
u_{k}^{(i)}=\frac{g_{k}^{(i)}}{f_{k}},\quad i=1,2,\cdots ,n\,,
\end{equation*}%
and discrete hodograph transformation
\begin{equation*}
x_{k}=2ka-2(\ln f_{k})_{s},\quad t=s\,,
\end{equation*}
where $\delta_k=x_{k+1}-x_k$.
\end{theorem}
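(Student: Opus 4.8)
The plan is to peel the two semi-discrete bilinear equations of (\ref{eq:sdBL1}) apart one at a time, converting each into a component of the target system (\ref{eq:sd-MSP}) using nothing but the substitution $u_k^{(i)}=g_k^{(i)}/f_k$ and the discrete hodograph relation $x_k=2ka-2(\ln f_k)_s$. No new identities are needed beyond the elementary $D_s^2 f_k\cdot f_k/f_k^2 = 2(\ln f_k)_{ss}$.

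First I would dispose of the second bilinear equation. Dividing $D_s^2 f_k\cdot f_k=\tfrac12\sum_{i<j}c_{ij}g_k^{(i)}g_k^{(j)}$ by $f_k^2$ gives (\ref{sd_BL2}), namely $(\ln f_k)_{ss}=\tfrac14\sum_{i<j}c_{ij}u_k^{(i)}u_k^{(j)}$. Since the mesh spacing is $\delta_k=x_{k+1}-x_k=2a-2\bigl(\ln(f_{k+1}/f_k)\bigr)_s$ by the discrete hodograph transformation, differentiating once more in $s$ and inserting (\ref{sd_BL2}) at sites $k$ and $k+1$ produces $d\delta_k/ds=-\tfrac12\sum_{i<j}c_{ij}\bigl(u_{k+1}^{(i)}u_{k+1}^{(j)}-u_k^{(i)}u_k^{(j)}\bigr)$, which is precisely the second equation of (\ref{eq:sd-MSP}) (this is exactly (\ref{sd1-CSP})).

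Next I would treat the first bilinear equation. Expanding the Hirota operator in $\tfrac1a D_s(g_{k+1}^{(j)}\cdot f_k-g_k^{(j)}\cdot f_{k+1})=g_{k+1}^{(j)}f_k+g_k^{(j)}f_{k+1}$ and dividing through by $f_{k+1}f_k$ yields
\[
\left(\frac{g_{k+1}^{(j)}}{f_{k+1}}-\frac{g_k^{(j)}}{f_k}\right)_s
+\left(\frac{g_{k+1}^{(j)}}{f_{k+1}}+\frac{g_k^{(j)}}{f_k}\right)
\left(\frac{f_{k+1,s}}{f_{k+1}}-\frac{f_{k,s}}{f_k}\right)
=a\left(\frac{g_{k+1}^{(j)}}{f_{k+1}}+\frac{g_k^{(j)}}{f_k}\right).
\]
Collecting the $\bigl(\ln(f_{k+1}/f_k)\bigr)_s$ terms and using the definition of $\delta_k$ to identify $a-\bigl(\ln(f_{k+1}/f_k)\bigr)_s=\tfrac12\delta_k$, this collapses to $d(u_{k+1}^{(j)}-u_k^{(j)})/ds=\tfrac12\delta_k\,(u_{k+1}^{(j)}+u_k^{(j)})$, the first equation of (\ref{eq:sd-MSP}) (this is (\ref{sd2-CSP})).

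The step I expect to require the most care — though it is bookkeeping rather than a genuine difficulty — is this last manipulation: one must expand $D_s$ acting on the \emph{difference} $g_{k+1}^{(j)}\cdot f_k-g_k^{(j)}\cdot f_{k+1}$ correctly, track which logarithmic-derivative combinations survive after the division by $f_{k+1}f_k$, and verify that the coefficient multiplying $(u_{k+1}^{(j)}+u_k^{(j)})$ reduces cleanly to $\tfrac12(x_{k+1}-x_k)$ with no residual dependence on $f_k$, $f_{k+1}$. Everything else — the $s$-differentiation of $\delta_k$ and the self-consistency of the discrete hodograph transformation — is routine. I would close with the remark that, in the continuum limit $a\to 0$, the system (\ref{eq:sd-MSP}) formally recovers the pair (\ref{sd1-CSP})–(\ref{sd2-CSP}) and hence the MCSP equation (\ref{CSPE}), confirming that (\ref{eq:sdBL1}) is a legitimate integrable semi-discretization.
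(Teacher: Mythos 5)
Your proposal is correct and follows essentially the same route as the paper: divide the second bilinear equation by $f_k^2$ to get $(\ln f_k)_{ss}=\tfrac14\sum_{i<j}c_{ij}u_k^{(i)}u_k^{(j)}$ and differentiate $\delta_k=2a-2\bigl(\ln(f_{k+1}/f_k)\bigr)_s$, then divide the first bilinear equation by $f_{k+1}f_k$ and identify $a-\bigl(\ln(f_{k+1}/f_k)\bigr)_s=\tfrac12\delta_k$, exactly as in the derivation leading to (\ref{sd1-CSP}) and (\ref{sd2-CSP}). The only quibble is your closing remark: (\ref{sd1-CSP})--(\ref{sd2-CSP}) are the semi-discrete system itself, not its continuum limit, and the $a\to0$ limit (which recovers (\ref{CSPE})) is not needed for this theorem.
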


To assure its integrability, we provide its multi-soliton solution in terms
of pfaffians by the following theorem. The elements of the pfaffians are defined as follows:

\begin{equation}  \label{sd-pf1}
\mathrm{Pf}(a_i,a_j)_k= \frac{p_i-p_j}{p_i+p_j}\varphi_i^{(0)}(k)%
\varphi_j^{(0)}(k)\,, \quad \mathrm{Pf}(a_i,b_j)_k=\delta_{i,j}\,,
\end{equation}

\begin{equation}  \label{sd-pf2}
\mathrm{Pf}(b_i,b_j)_k=\frac 14 \frac{c_{\mu \nu }}{p^{-2}_i-p^{-2}_{j}}\,,
(b_i \in B _\mu, b_j \in B_\nu)\,,
\end{equation}

\begin{equation}  \label{sd-pf3}
\mathrm{Pf}(d_l,a_i)_k= \varphi_i^{(l)}(k)\,, \quad \mathrm{Pf}(a_i, d^k)_k=
\varphi_i^{(0)}(k+1)\,,
\end{equation}

\begin{equation}  \label{sd-pf4}
\mathrm{Pf}(b_j,\beta_\mu)_k= \left\{
\begin{array}{ll}
1 & \quad b_j \in B_\mu \\
0 & \quad b_j \notin B_\mu%
\end{array}%
\right.
\end{equation}

\begin{equation}  \label{sd-pf5}
\mathrm{Pf}(d_0, d^k)_k=1, \quad \mathrm{Pf}(d_{-1}, d^k)_k=-a\,,
\end{equation}
where
\[
\varphi_i^{(n)}(k) =p_i^n\left(\frac{1+ap_i}{1-ap_i}\right)^ke^{\xi_i},
\quad \xi_i=p_i^{-1} s +\xi_{i0}\,.
\]
Here $i,j=1, 2, \cdots, 2N$, $\mu, \nu= 1, 2, \cdots, n$ and $k,l$ are
arbitrary integers. Other pfaffian elements not mentioned above are all
zeros. Note that $\varphi_i^{(n)}(k)$ has the following property
\[
\frac{\varphi_i^{(n)}(k+1)-\varphi_i^{(n)}(k)}{a} =
\varphi_i^{(n+1)}(k+1)+\varphi_i^{(n+1)}(k)\,,
\]
which is used in the proof of the theorem.

\begin{theorem}
The bilinear equations (\ref{eq:sdBL1})
have the following pfaffian solution
\begin{eqnarray}
f_{k} &=&\mathrm{Pf}(a_{1},\cdots ,a_{2N},b_{1},\cdots ,b_{2N})_{k}\,,
\label{sd_solution} \\
g_{k}^{(j)} &=&\mathrm{Pf}(d_{0}, \beta _{j},a_{1},\cdots
,a_{2N},b_{1},\cdots ,b_{2N})_{k}\,,
\end{eqnarray}%
where $j=1,2,\cdots ,n$, the elements of pfaffians are defined in eqs. (\ref%
{sd-pf1})--(\ref{sd-pf5}).
\end{theorem}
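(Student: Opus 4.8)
The plan is to proceed exactly as in the proof of Theorem~1: realize every quantity entering (\ref{eq:sdBL1})---$f_k$, $f_{k+1}$, $g_k^{(j)}$, $g_{k+1}^{(j)}$ and all their $s$-derivatives---as pfaffians built on the common body $(a_1,\dots,a_{2N},b_1,\dots,b_{2N})_k$, and then reduce each bilinear equation to an algebraic pfaffian identity of Plücker (Jacobi) type \cite{Hirota}. Two sets of rules are needed. The differentiation rules are obtained verbatim from Theorem~1, since at a fixed site $k$ the functions $\varphi_i^{(n)}(k)$ behave under $\partial_s$ precisely as $p_i^n e^{\xi_i}$ does in the continuous case ($\partial_s\varphi_i^{(n)}(k)=\varphi_i^{(n-1)}(k)$): one gets $\partial_s f_k=\mathrm{Pf}(d_{-1},d_0,\cdots)_k$, $\partial_s^2 f_k=\mathrm{Pf}(d_{-2},d_0,\cdots)_k$, $g_k^{(j)}=\mathrm{Pf}(d_0,\beta_j,\cdots)_k$, $\partial_s g_k^{(j)}=\mathrm{Pf}(d_{-1},\beta_j,\cdots)_k$, and so on. The shift rules are the genuinely new ingredient: using the stated property $\frac1a\bigl(\varphi_i^{(n)}(k+1)-\varphi_i^{(n)}(k)\bigr)=\varphi_i^{(n+1)}(k+1)+\varphi_i^{(n+1)}(k)$ together with the entries $\mathrm{Pf}(a_i,d^k)_k=\varphi_i^{(0)}(k+1)$, $\mathrm{Pf}(d_0,d^k)_k=1$, $\mathrm{Pf}(d_{-1},d^k)_k=-a$, one shows that inserting the pair $(d_0,d^k)$ into a level-$k$ pfaffian over $(a,b)$-type indices promotes every $a$-entry to level $k+1$, so that $f_{k+1}=\mathrm{Pf}(d_0,d^k,\cdots)_k$; the site-shifted objects $g_{k+1}^{(j)}$, $\partial_s f_{k+1}$ and $\partial_s g_{k+1}^{(j)}$ are handled by the corresponding insertions, which will likewise be recorded.

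The first point to pin down is the shift rule itself. The equality $f_{k+1}=\mathrm{Pf}(d_0,d^k,\cdots)_k$ is not automatic; it follows from the explicit form of $\mathrm{Pf}(a_i,a_j)_k$ and the relation $(1-ap_i)\varphi_i^{(0)}(k+1)=(1+ap_i)\varphi_i^{(0)}(k)$, which produces exactly the factor $\frac{(1+ap_i)(1+ap_j)}{(1-ap_i)(1-ap_j)}$ needed to turn $\mathrm{Pf}(a_i,a_j)_k$ into $\mathrm{Pf}(a_i,a_j)_{k+1}$ after the cofactor expansions on $d^k$ and $d_0$. In that expansion $\mathrm{Pf}(d_0,d^k)_k=1$ reproduces the original level-$k$ pfaffian when $d^k$ is contracted against $d_0$, while $\mathrm{Pf}(d_{-1},d^k)_k=-a$ is precisely what will later account for the $1/a$ prefactor in the first bilinear equation.

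Granting the rules, the second bilinear equation $D_s^2 f_k\cdot f_k=\frac12\sum_{1\le i<j\le n}c_{ij}g_k^{(i)}g_k^{(j)}$ is immediate: site by site it has the same form as the second equation of (\ref{CSP_Bilinear}), the $a$--$b$ and $b$--$b$ entries are literally the same, and $\varphi_i^{(0)}(k)$ plays the role of $e^{\xi_i}$; hence the bordered-pfaffian manipulation in the proof of Theorem~1---expanding the vanishing pfaffians $\mathrm{Pf}(b_i,d_0,\cdots)_k$ and $\mathrm{Pf}(a_i,d_0,\cdots)_k$ and using the term $\sum_{i,j}(-1)^{i+j}(p_i^{-2}-p_j^{-2})\mathrm{Pf}(b_i,b_j)(\cdots)$---carries over unchanged to yield $2 f_{k,ss}f_k-2 f_{k,s}^2=\tfrac12\sum_{i<j}c_{ij}g_k^{(i)}g_k^{(j)}$.

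The bulk of the work is the first bilinear equation. Writing $D_s\bigl(g_{k+1}^{(j)}\cdot f_k-g_k^{(j)}\cdot f_{k+1}\bigr)$ out by the Leibniz rule and substituting the differentiation and shift rules, both sides turn into pfaffians over the body augmented by the four special rows $d_{-1},d_0,d^k,\beta_j$, and the identity to be verified is the exact analogue of the one used in Theorem~1 with $d_1$ replaced by its site-shift counterpart $d^k$: schematically, $\mathrm{Pf}(d_{-1},\beta_j,d_0,d^k,\cdots)\,\mathrm{Pf}(\cdots)=\mathrm{Pf}(d_{-1},d_0,\cdots)\,\mathrm{Pf}(d^k,\beta_j,\cdots)-\mathrm{Pf}(d_{-1},d^k,\cdots)\,\mathrm{Pf}(d_0,\beta_j,\cdots)+\mathrm{Pf}(d_{-1},\beta_j,\cdots)\,\mathrm{Pf}(d_0,d^k,\cdots)$. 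One must check, by the same single-row cofactor trick as in Theorem~1, that $\frac1a$ times $D_s(g_{k+1}^{(j)}\cdot f_k-g_k^{(j)}\cdot f_{k+1})$ minus $\bigl(g_{k+1}^{(j)}f_k+g_k^{(j)}f_{k+1}\bigr)$ is exactly the left-hand product above, the constant entries $\mathrm{Pf}(d_0,d^k)_k=1$ and $\mathrm{Pf}(d_{-1},d^k)_k=-a$ being what absorbs the $+$ term on the right of (\ref{eq:sdBL1}) and the $1/a$. I expect the sign bookkeeping in this last step---keeping track of the position of $d^k$ relative to $d_{-1},d_0,\beta_j$ as the various pfaffians are expanded along a single row---to be the genuinely delicate part; everything else is a transcription of the continuous argument.
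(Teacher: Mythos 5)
Your proposal is correct and takes essentially the same route as the paper's proof: the same differentiation and shift rules (namely $\partial_s f_k=\mathrm{Pf}(d_{-1},d_0,\cdots)_k$, $f_{k+1}=\mathrm{Pf}(d_0,d^k,\cdots)_k$, $(\partial_s-a)f_{k+1}=\mathrm{Pf}(d_{-1},d^k,\cdots)_k$ and their analogues with $\beta_j$ inserted), followed by exactly the Pl\"ucker-type pfaffian identity on the four characters $d_{-1},\beta_j,d_0,d^k$ for the first bilinear equation, with the second equation carried over from the continuous case as in Theorem~1. One phrasing slip to fix when writing it up: the combination $\frac{1}{a}D_s\bigl(g^{(j)}_{k+1}\cdot f_k-g^{(j)}_k\cdot f_{k+1}\bigr)-\bigl(g^{(j)}_{k+1}f_k+g^{(j)}_k f_{k+1}\bigr)$ equals $\frac{1}{a}$ times the difference of the two sides of that identity (hence vanishes), not the left-hand product itself, since $\mathrm{Pf}(d_{-1},\beta_j,d_0,d^k,\cdots)_k\,\mathrm{Pf}(\cdots)_k=(\partial_s-a)g^{(j)}_{k+1}\cdot f_k$, which is generally nonzero.
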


\begin{proof}
Since
\[
\frac{\partial}{\partial s} \mathrm{Pf} (a_i,a_j)_{k}
=\varphi_i^{(0)}(k)\varphi_j^{(-1)}(k)
-\varphi_i^{(-1)}(k)\varphi_j^{(0)}(k)=\mathrm{Pf} (d_{-1}, d_0,
a_i,a_j)_{k}\,,
\]

\begin{eqnarray*}
\mathrm{Pf}(a_i,a_j)_{k+1}&=& \mathrm{Pf}(a_i,a_j)_{k}
+\varphi_i^{(0)}(k+1)\varphi_j^{(0)}(k)
-\varphi_i^{(0)}(k)\varphi_j^{(0)}(k+1) \\
&=& \mathrm{Pf} (d_{0}, d^k, a_i,a_j)_{k}\,,
\end{eqnarray*}

\begin{eqnarray*}
&& \left(\partial_s-a \right) \mathrm{Pf}(a_i,a_j)_{k+1} \\
&& \quad =\varphi_i^{(0)}(k+1)\varphi_j^{(-1)}(k+1)
-\varphi_i^{(-1)}(k+1)\varphi_j^{(0)}(k+1) \\
&& \qquad -a \left( \mathrm{Pf} (a_i,a_j)_{k} +
\varphi_i^{(0)}(k+1)\varphi_j^{(0)}(k)
-\varphi_i^{(0)}(k)\varphi_j^{(0)}(k+1)\right) \\
&& \quad =-a \mathrm{Pf} (a_i,a_j)_{k} + \varphi_i^{(0)}(k+1) \left(
\varphi_j^{(-1)}(k+1)-a \varphi_j^{(0)}(k) \right) \\
&& \qquad - \varphi_j^{(0)}(k+1) \left( \varphi_i^{(-1)}(k+1) -a
\varphi_i^{(0)}(k) \right) \\
&& \quad = -a \mathrm{Pf} (a_i,a_j)_{k} + \varphi_i^{(0)}(k+1) \left( a
\varphi_j^{(0)}(k+1)+ \varphi_j^{(-1)}(k) \right) \\
&& \qquad - \varphi_j^{(0)}(k+1) \left( a\varphi_i^{(0)}(k+1) +
\varphi_i^{(-1)}(k) \right) \\
&& \quad =-a \mathrm{Pf} (a_i,a_j)_{k} + \varphi_i^{(0)}(k+1)
\varphi_j^{(-1)}(k) - \varphi_i^{(-1)}(k) \varphi_j^{(0)}(k+1) \\
&& \quad = \mathrm{Pf} (d_{-1}, d^k, a_i,a_j)_{k}\,,
\end{eqnarray*}
we have
\[
{\partial_s} f_k = \mathrm{Pf} (d_{-1}, d_0, \cdots)_k\,,
\]

\[
f_{k+1} = \mathrm{Pf} (d_0,d^k \cdots)_k\,,
\]
\[
\left(\partial_s-a \right) f_{k+1} = \mathrm{Pf} (d_{-1},d^k \cdots)_k\,.
\]
Furthermore, we can verify
\begin{eqnarray*}
&& {\partial_s} g^{(\mu)}_k = \partial_s \left(\sum_{i=1}^{2N} (-1)^{i}
\mathrm{Pf} (d_0, a_i)_k \mathrm{Pf} (\beta_\mu, \cdots ,\hat{a}_i,
\cdots)_k\right) \\
&& = \sum_{i=1}^{2N} (-1)^{i} \left(\left(\partial_s \mathrm{Pf} (d_0,
a_i)_k \right) \mathrm{Pf} (\beta_\mu, \cdots ,\hat{a}_i, \cdots)_k +
\mathrm{Pf} (d_0, a_i)_k {\partial_s} \mathrm{Pf} (\beta_\mu, \cdots ,\hat{a}%
_i, \cdots)_k \right) \\
&& =\sum_{i=1}^{2N} (-1)^{i} \left( \mathrm{Pf} (d_{-1}, a_i)_k \mathrm{Pf}
(\beta_\mu, \cdots ,\hat{a}_i, \cdots)_k + \mathrm{Pf} (d_0, a_i)_k \mathrm{%
Pf} (\beta_\mu, d_{-1}, d_0, \cdots ,\hat{a}_i, \cdots)_k \right) \\
&& = \mathrm{Pf} (d_{-1}, \beta_\mu,  \cdots)_k + \mathrm{Pf} (d_0, \beta_\mu,
 d_{-1}, d_0, \cdots)_k \\
&& = \mathrm{Pf} (d_1, \beta_\mu,  \cdots)_k\,,
\end{eqnarray*}

\begin{eqnarray*}
&& g^{(\mu)}_{k+1} = \sum_{i=1}^{2N}(-1)^{i}\mathrm{Pf}(d_0,a_i)_{k+1}
\mathrm{Pf}(\beta_\mu,\cdots,\hat{a}_i,\cdots)_{k+1} \\
&& = \sum_{i=1}^{2N}(-1)^{i-1}\mathrm{Pf}(d^k,a_i)_k \mathrm{Pf}%
(\beta_\mu,d_0,d^k,\cdots,\hat{a}_i,\cdots)_k \\
&& = \sum_{i=1}^{2N}(-1)^{i-1}\mathrm{Pf}(d^k,a_i)_k \left( \mathrm{Pf}%
(\beta_\mu,\cdots,\hat{a}_i,\cdots)_k +\sum_{j=1}^{i-1}(-1)^j\mathrm{Pf}
(\beta_\mu,d^k,\cdots,\hat{a}_j,\cdots,\hat{a}_i,\cdots)_k \right. \\
&& \left. +\sum_{j=i+1}^{2N}(-1)^{j-1}\mathrm{Pf} (\beta_\mu,d^k,\cdots,\hat{%
a}_i,\cdots,\hat{a}_j,\cdots)_k\right) \\
&& =\mathrm{Pf} (d^k,  \beta_\mu, \cdots)_k\,,
\end{eqnarray*}

\begin{eqnarray*}
&& ({\partial_s}-a) g^{(\mu)}_{k+1} = \sum_{i=1}^{2N}(-1)^{i-1} \left(({%
\partial_s}-a)\mathrm{Pf}(d^k,a_i)_k\right) \mathrm{Pf}(\beta_\mu,\cdots,%
\hat{a}_i,\cdots)_k \\
&& +\sum_{1\le i<j\le 2N}(-1)^{i+j-1}\left(\partial_s\mathrm{Pf}(a_i,a_j)_k
\right)\mathrm{Pf}(\beta_\mu,d^k,\cdots,\hat{a}_i,\cdots,\hat{a}_j,\cdots)_k
\\
&& = \sum_{i=1}^{2N}(-1)^{i-1} \left(\mathrm{Pf}(d_{-1},a_i)_k+a\mathrm{Pf}%
(d_0,a_i)_k\right) \mathrm{Pf}(\beta_\mu,\cdots,\hat{a}_i,\cdots)_k \\
&& +\sum_{1\le i<j\le 2N}(-1)^{i+j-1}\mathrm{Pf}(d_{-1},d_0,a_i,a_j)_k
\mathrm{Pf}(\beta_\mu,d^k,\cdots,\hat{a}_i,\cdots,\hat{a}_j,\cdots)_k \\
&& =\mathrm{Pf} (d_{-1}, \beta_\mu, d_0, d^k, \cdots)_k\,.
\end{eqnarray*}

Therefore, an algebraic identity of pfaffian
\begin{eqnarray*}
&&\mathrm{Pf}(d_{-1}, \beta _{\mu },d_{0},d^{k},\cdots )_{k}\mathrm{Pf}%
(\cdots )_{k}=\mathrm{Pf}(d_{-1},d_{0},\cdots )_{k}\mathrm{Pf}(d^{k}, \beta _{\mu
},\cdots )_{k} \\
&&\quad -\mathrm{Pf}(d_{-1},d^{k},\cdots )_{k}\mathrm{Pf}(d_{0}, \beta _{\mu
},\cdots )_{k}+\mathrm{Pf}(d_{-1}, \beta _{\mu },\cdots )_{k}\mathrm{Pf}%
(d_{0},d^{k},\cdots )k\,,
\end{eqnarray*}%
together with above pfaffian relations gives
\[
({\partial _{s}}-a)g_{k+1}^{(\mu )}\times f_{k}=g_{k+1}^{(\mu )}\times {%
\partial _{s}}f_{k}-({\partial _{s}}-a)f_{k+1}\times g_{k+1}^{(\mu )}+{%
\partial _{s}}g_{k}^{(\mu )}\times f_{k+1}\,,
\]%
which is nothing but the first bilinear equation. The second bilinear
equation can be proved in a similar way as in the continuous case.
\end{proof}

\begin{remark}
Bilinear equations (\ref{eq:sdBL1}) can be viewed as a B\"{a}cklund
transformation of bilinear equations (\ref{CSP_Bilinear}), which yield
the MCSP equation. In other words, if $f_{k}$ and $%
g_{k}^{(j)}$ satisfy (\ref{CSP_Bilinear}), so do $f_{k+1}$ and $g_{k+1}^{(j)}
$ . Based on (\ref{eq:sdBL1}), we propose a semi-discrete analogue of
the MCSP equation. The integrability of the semi-discrete MCSP equation is guaranteed by the existence of $N$-soliton solution.
\end{remark}

Finally, let us show that in the continuous limit, $a\to 0$ ($\delta_k\to 0$),
the proposed semi-discrete multi-component short pulse equation recovers the continuous one (%
\ref{CPE1}). The dependent variable $u$ is regarded as a function of $x$ and
$t$, where $x$ is the space coordinate of the $k$-th lattice point and $t$
is the time, defined by
\[
x=x_0+\sum_{j=0}^{k-1}\delta_j\,,\qquad t=s\,,
\]
where $\delta_j=x_{j+1}-x_j$. In the continuous limit, $a\to 0$ ($%
\delta_k\to 0$), we have
\[
\frac 12 (u^{(i)}_{k+1}+u^{(i)}_k) \to u_{i}\,, \quad \frac{%
\partial_s(u^{(i)}_{k+1}-u^{(i)}_k)}{\delta_k} \to u_{i,sx}\,,
\]

\begin{eqnarray*}
&& \frac{\partial x}{\partial s} =\frac{\partial x_0}{\partial s}
+\sum_{j=0}^{k-1}\frac{\partial\delta_j}{\partial s} \\
&& \qquad =\frac{\partial x_0}{\partial s}-\frac 12 \sum_{1\le \mu< \nu \le
n} c_{\mu \nu} \sum_{j=0}^{k-1}\left( u^{(\mu)}_{j+1}u^{(\nu)}_{j+1}-
u^{(\mu)}_{j}u^{(\nu)}_{j}\right) \\
&& \qquad \to -\frac 12 \sum_{1\le \mu< \nu \le n} c_{\mu \nu}u_{\mu}u_{\nu}
\,,
\end{eqnarray*}
hence
\[
\partial_s=\partial_t+\frac{\partial x}{\partial s}\partial_x \to\partial_t
-\frac 12 \left(\sum_{1\le \mu< \nu \le n} c_{\mu \nu}u_{\mu}u_{\nu}\right)
\partial_x\,,
\]
where the origin of space coordinate $x_0$ is taken so that $\displaystyle%
\frac{\partial x_0}{\partial s}$ cancels $\frac 12 \sum_{1\le \mu< \nu \le
n} c_{\mu \nu}u^{(\mu)}_0u^{(\nu)}_0$. Thus the first semi-discrete
multi-component SP equation converges to
\[
\partial_x\left(\partial_t-\frac 12 \left(\sum_{1\le \mu< \nu \le n} c_{\mu
\nu}u_{\mu}u_{\nu}\right) \partial_x\right)u_{i} =u_{i}\,,
\]
which is exactly the MCSP equation (\ref{CSPE}).
\section{Two-component short pulse equation}
Since the two-component short pulse equation is of particular importance for applications in nonlinear optics,
we provide a detailed study for this two-component system, together with its
semi-discrete analogue in this section. For the continuous case of $n=2$, we can take $c_{12}=1$ without loss of generality and arrive at the following two-component system \cite{Matsuno_CSPE}
\begin{eqnarray}
u_{xt} & = & u+ \frac 12 \left( uv u_x \right)_{x}\,,  \label{2CSPEa1} \\
v_{xt} & = & v+ \frac 12 \left( uv v_x \right)_{x}\,,  \label{2CSPEa2}
\end{eqnarray}
where $u=u_{1}, v=u_{2}$.  Furthermore, if we assume $u$ is a complex-values function and impose a complex conjugate condition $v=\bar{u}$,
where $\bar{u}$ means the complex conjugate of $u$. Eq.
 (\ref{2CSPEa1}) leads to a \textbf{complex short pulse equation} studied in \cite{comSPE,dcomSPE}
\begin{equation}  \label{complex_SP}
u_{xt} = u+ \frac 12 \left( |u|^2 u_x \right)_{x}\,.
\end{equation}
Since the complex short pulse equation (\ref{complex_SP}) is a special case of
two-component system (\ref{2CSPEa1})--(\ref{2CSPEa2}), its $N$-soliton
solution can be obtained from the $N$-soliton solution of system (\ref%
{2CSPEa1})--(\ref{2CSPEa2}) by requiring $f={\bar {f}}$, $g^{(1)}={\bar g}^{(2)}$.
These requirements can be achieved by putting $p_{k}={\bar p}_j$ and
$\xi_{k0}={\bar \xi}_{j0}$, where $k=j+N$, $j=1, 2, \cdots, N$.

In particular, the tau-functions for one-soliton solution  ($N=1$) are found to be
\begin{eqnarray}
&& f =
-1-\frac 14 \frac {(p_1\bar{p}_1)^2}{(p_1+\bar{p}_1)^2}
e^{\eta_1+\bar{\eta}_1} \,,
\end{eqnarray}
\begin{equation}
g =  - e^{\eta_1}\,.
\end{equation}
Let $p_1=p_{1R}+ \mathrm{{i}} p_{1I}$, and we assume $p_{1R} >0$ without loss of
generality, then the one-soliton solution can be expressed in the following
parametric form
\begin{equation}  \label{CSP1solitona}
u=\frac{2p_{1R}}{|p_1|^2}e^{\mathrm{{i}} \eta_{1I}} %
\mbox{sech} \left(\eta_{1R}+\eta_{10}\right) \,,
\end{equation}
\begin{equation}  \label{CSP1solitonb}
x=y-\frac{2p_{1R}}{|p_1|^2}\left(\tanh \left(\eta_{1R}+\eta_{10}\right)+1\right)\,,
\quad t=s\,,
\end{equation}
where
\begin{equation}
\eta_{1R}=p_{1R}y+\frac{p_{1R}}{|p_1|^2}s , \quad \eta_{1I}=p_{1I} y-\frac{%
p_{1I}}{|p_1|^2}s \,,\quad \eta_{10}=\ln \frac{|p_1|^2}{4p_{1R}}\,.
\end{equation}

Eq. (\ref{CSP1solitona}) represents an envelope soliton of amplitude $%
2p_{1R}/|p_1|^2$ and phase $\eta_{1I}$. The details analysis concerning its property was carried out
in \cite{comSPE}. In summary，three types can be classified.
\begin{itemize}
\item when $|p_{1R}| < |p_{1I}|$, it is a \textbf{smooth soliton} solution, which is
similar to the envelope soliton solution for the nonlinear Schr{\"o}dinger equation.
\item when $|p_{1R}| > |p_{1I}|$, it is a \textbf{loop soliton} solution, which admits multi-valued property.
\item  when $|p_{1R}| = |p_{1I}|$, it is a \textbf{cuspon soliton}, this is a case which divides the single-valued and
multi-valued solution.
\end{itemize}
\subsection{Semi-discrete two-component system}
Based on the results in previous section, we have an integrable semi-discrete analogue of
two-component system (\ref{2CSPEa1})--(\ref{2CSPEa2})
\begin{equation}
\left\{
\begin{array}{l}
\displaystyle \frac{d (u_{k+1}-u_k)} {dt}  = \frac 12 \delta_k (u_{k+1}+u_k), \\
\displaystyle \frac{d (v_{k+1}-v_k)} {dt}  = \frac 12 \delta_k (v_{k+1}+v_k), \\
\displaystyle \frac{d \delta_k} {dt}  =  -\frac 12 (u_{k+1}v_{k+1}-u_kv_k),
\end{array}%
\right.   \label{sd_2CSP}
\end{equation}%
which admit the following $N$-soliton solution in parametric form
\begin{equation}
u_k=\frac{g^{(1)}_{k}}{f_k}, \quad v_k=\frac{g^{(2)}_{k}}{f_k}
\end{equation}
and hodograph transformation
\begin{equation}
x_k = 2ka -2(\ln f_k)_s, \quad t=s\,,
\end{equation}
with
\begin{equation}
f_k= \mathrm{Pf} (a_1, \cdots, a_{2N}, b_1, \cdots, b_N, c_1, \cdots,
c_N)_k\,,
\label{eq:twocomNsol1}
\end{equation}
\begin{equation}
g^{(i)}_k= \mathrm{Pf} (\beta_i, d_0, a_1, \cdots, a_{2N}, b_1, \cdots, b_N,
c_1, \cdots, c_N)_k\,, \quad i=1,2
\label{eq:twocomNsol2}
\end{equation}
where the elements of pfaffians are (others not mentioned are all zeros)
\[
\mathrm{Pf}(a_i,a_j)_k= \frac{p_i-p_j}{p_i+p_j}\varphi_i^{(0)}(k)%
\varphi_j^{(0)}(k)\,, \quad \mathrm{Pf}(a_i,b_j)_k=\delta_{i,j},
\]

\[
\mathrm{Pf} (b_i,c_j)=-\frac 14 \frac{(p_ip_{N+j})^2}{p^2_i-p^2_{N+j}} \,,
\quad \mathrm{Pf} (a_i,c_j)_k = \delta_{i,j+N}\,,
\]

\[
\mathrm{Pf}(b_i,\beta_1)=\mathrm{Pf}(c_i,\beta_2)=1\,, \quad \mathrm{Pf}%
(d_0,a_i)_k= \varphi_i^{(0)}(k) \,.
\]
By imposing complex conjugate conditions $p_{k}={\bar p}_j$, $\xi_{k0}={\bar \xi}_{j0}$
($k=j+N$, $j=1,2, \cdots, N$), it then follows $f_k={\bar f}_k$, $%
g^{(1)}_k={\bar g}^{(2)}_k$, thus $u_k={\bar v}_k$, which leads to a
semi-discrete analogue of the complex short equation (\ref{complex_SP})
\begin{equation}
\left\{
\begin{array}{l}
\displaystyle \frac{d (u_{k+1}-u_k)} {dt}  = \frac 12 \delta_k (u_{k+1}+u_k)\,, \\
\displaystyle \frac{d \delta_k} {dt}  =  -\frac 12 (|u_{k+1}|^2-|u_k|^2)\,. \\
\end{array}%
\right.   \label{sd_2ComSP}
\end{equation}%
Its $N$-soliton solution immediately follows from the $N$-soliton solution
of Eqs. (\ref{eq:twocomNsol1})--(\ref{eq:twocomNsol2}) under complex conjugate
conditions mentioned above. In what follows, we list the one- and two-soliton solutions.

{\textbf{One-soliton solution:}}
The tau-functions for one-soliton solution to Eq. (\ref{sd_2ComSP}) are
\begin{equation}
f_k=-1-\frac 14 \frac {(p_1 p_2)^2}{(p_1+p_2)^2} \varphi_{12}^{(0)}(k)\,,
\quad g^{(1)}_k=-\varphi_1^{(0)}(k),
\end{equation}
with
\[
\varphi_i^{(0)}(k)=\left(\frac{1+ap_i}{1-ap_i}\right)^k
e^{p^{-1}_is+\xi_{i0}}, \quad
\varphi_{ij}^{(0)}(k)=\varphi_i^{(0)}(k)\varphi_j^{(0)}(k), \quad i,j=1,2\,,
\]
where $p_1={\bar p}_2$, $\xi_{10}={\bar \xi}_{20}$.
Similar to the continuous case,  if $p_1=p_{1R}+ \mathrm{{i}}p_{1I}$, we then arrive at the one-soliton solution of
semi-discrete complex short pulse equation (\ref{sd_2ComSP})

\begin{equation}  \label{sd_2Com_SP1solitona}
u_k=\frac{2p_{1R}}{|p_1|^2}  e^{\mathrm{{i}}
\chi_k}\mbox{sech} (\theta_k+\theta_0) \,,
\end{equation}
\begin{equation} \label{sd_2Com_SP1solitonb}
 x_k=2ka-\frac{2p_{1R}}{|p_1|^2}\left(\tanh
(\theta_k+\theta_0)+1\right)\,,
\end{equation}
where
\begin{equation}
 \theta_k=kd_1+\frac{p_{1R}}{|p_1|^2}s,
\quad \chi_k=kd_2-\frac{p_{1R}}{|p_1|^2}s\,, \quad \frac{1+ap_1}{1-ap_1}=e^{d_1+\mathrm{{i}}d_2}\,.
\end{equation}
In Fig. 1 (a)--(c), we illustrate the envelope soliton for $p_1=1+ 1.5\mathrm{{i}}$, $1+ \mathrm{{i}}$, $1+ 0.5\mathrm{{i}}$, which correspond to the smooth, cuspon and loop solition, respectively in the continuous case.
\begin{figure}[htbp]
\centerline{
\includegraphics[scale=0.38]{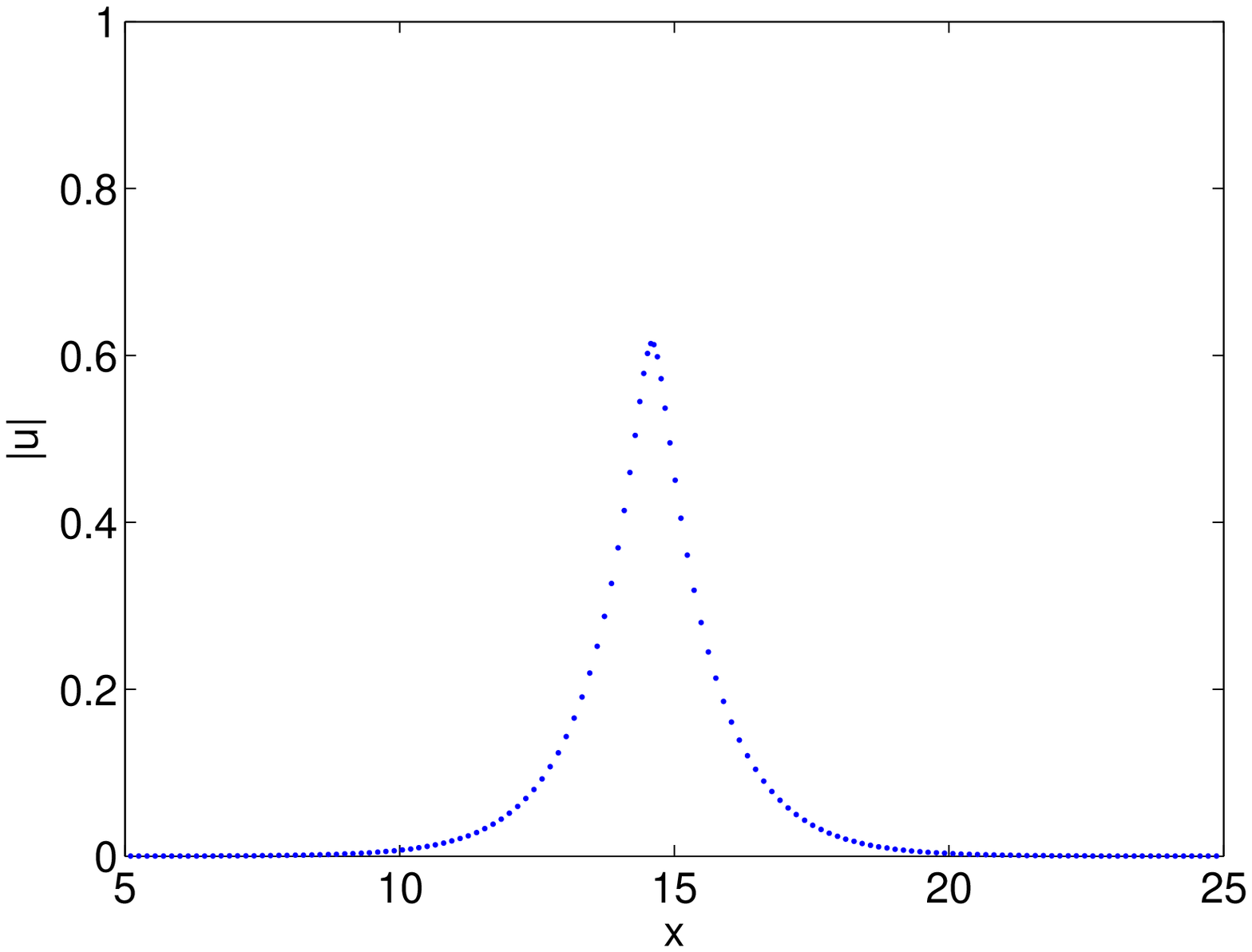}\quad
\includegraphics[scale=0.38]{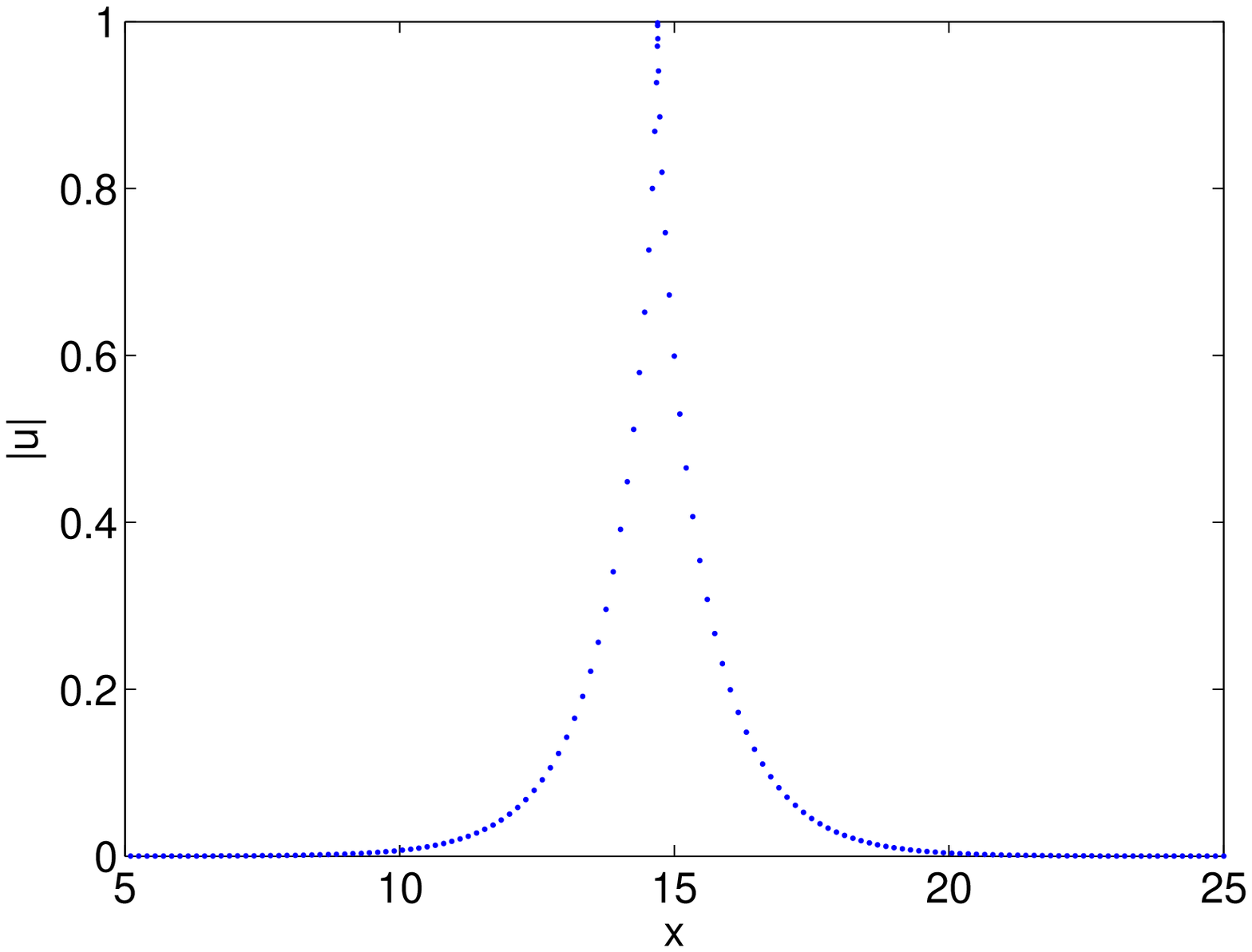}} \kern-0.3\textwidth
\hbox to
\textwidth{\hss(a)\kern13.5em\hss(b)\kern14em} \kern+0.3\textwidth
\centerline{
\includegraphics[scale=0.38]{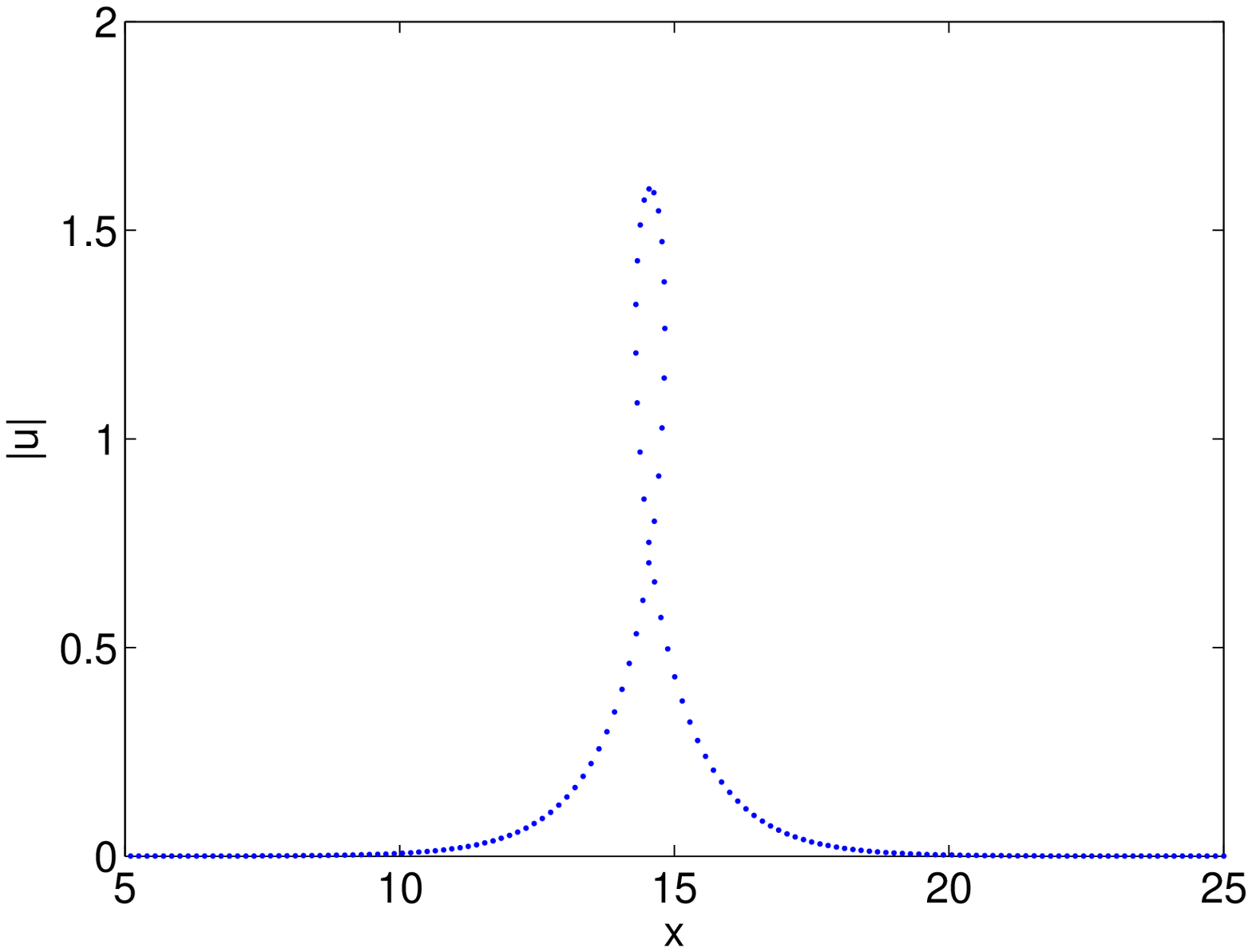}} \kern-0.3\textwidth
\hbox to
\textwidth{\hss(c)\kern23em} \kern+0.3\textwidth
\caption{Envelope soliton for the semi-discrete complex short pulse equation (a) smooth soliton
with $p_1=1+1.5\mathrm{{i}}$, (b) cuspon soliton with $p_1=1+\mathrm{{i}}$, (c) loop soliotn with $p_1=1+0.5\mathrm{{i}}$.}
\label{figure:cspe1soliton}
\end{figure}
{\textbf{Two-soliton:}} The tau-functions for two-soliton solutions to semi-discrete complex short
pulse equation (\ref{sd_2ComSP}) can be obtained by
\begin{eqnarray*}
&& f_k=1+a_{13}\varphi_{13}^{(0)}(k) + a_{14} \varphi_{14}^{(0)}(k) +a_{23}
\varphi_{23}^{(0)}(k) +a_{24} \varphi_{24}^{(0)}(k)  \nonumber \\
&& \qquad +16a_{12}a_{23}a_{14}a_{24}\left(p_1^{-1}-p_2^{-1} \right)^2
\left(p_3^{-1}-p_4^{-1} \right)^2
\varphi_{12}^{(0)}(k)(k)\varphi_{34}^{(0)}(k) \,,
\end{eqnarray*}

\begin{equation*}
g^{(1)}_k= \varphi_1^{(0)}(k)+ \varphi_2^{(0)}(k)+ 4\left( a_{13}a_{23}
\varphi_3^{(0)}(k)  + a_{14}a_{24} \varphi_4^{(0)}(k) \right)
\left(p_1^{-1}-p_2^{-1} \right)^2 \varphi_{12}^{(0)}(k) \,.
\end{equation*}
where
\begin{equation*}
a_{ij}=\frac{p_i-p_j}{p_i+p_j}\,, \quad p_1={\bar p}_3, \quad p_2={\bar p}_4
\end{equation*}
and $\xi_{10}={\bar \xi}_{30}$ and $\xi_{20}={\bar \xi}_{40}$. In Fig. %
2 (a)--(d), we show the process of interaction between a smooth envelop
soliton and a cuspon envelop soliton with $p_1=1+1.5\mathrm{{i}}$, $p_2=1+\mathrm{{i}}$, $%
\xi_{10}=-15$, $\xi_{20}=-25$.

\begin{figure}[htbp]
\centerline{
\includegraphics[scale=0.35]{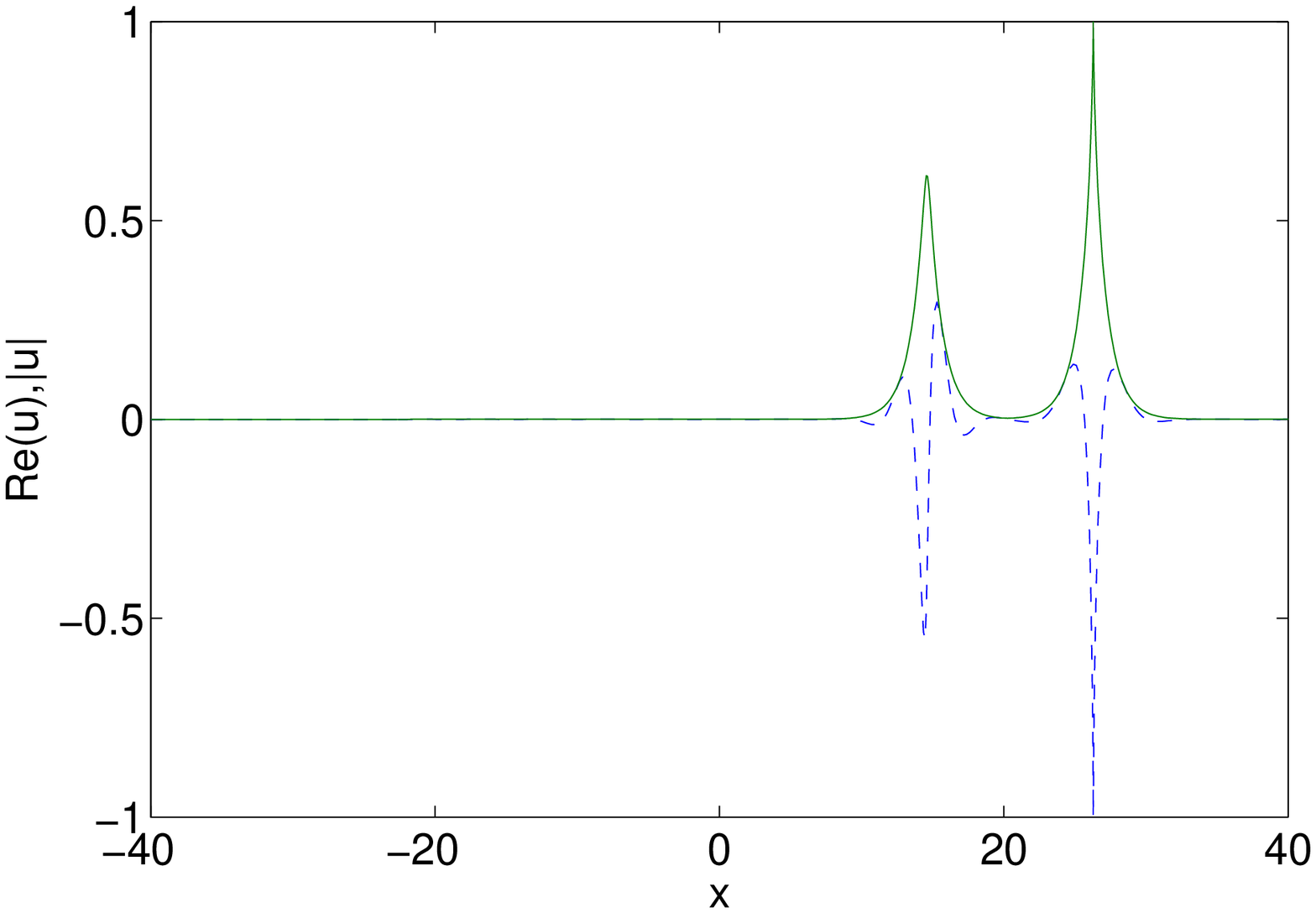}\quad
\includegraphics[scale=0.35]{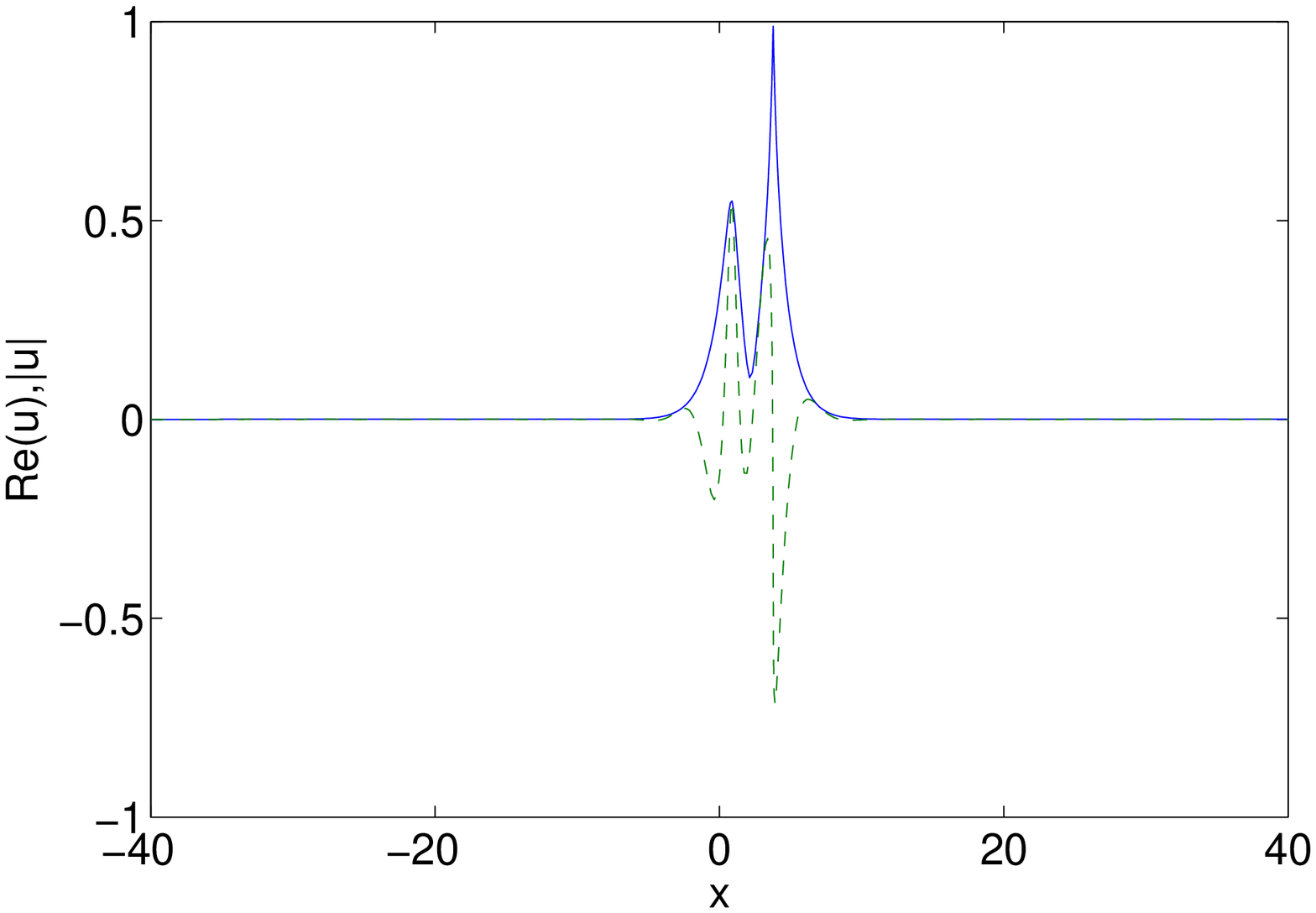}} \kern-0.3\textwidth
\hbox to
\textwidth{\hss(a)\kern0em\hss(b)\kern5em} \kern+0.25\textwidth
\centerline{
\includegraphics[scale=0.35]{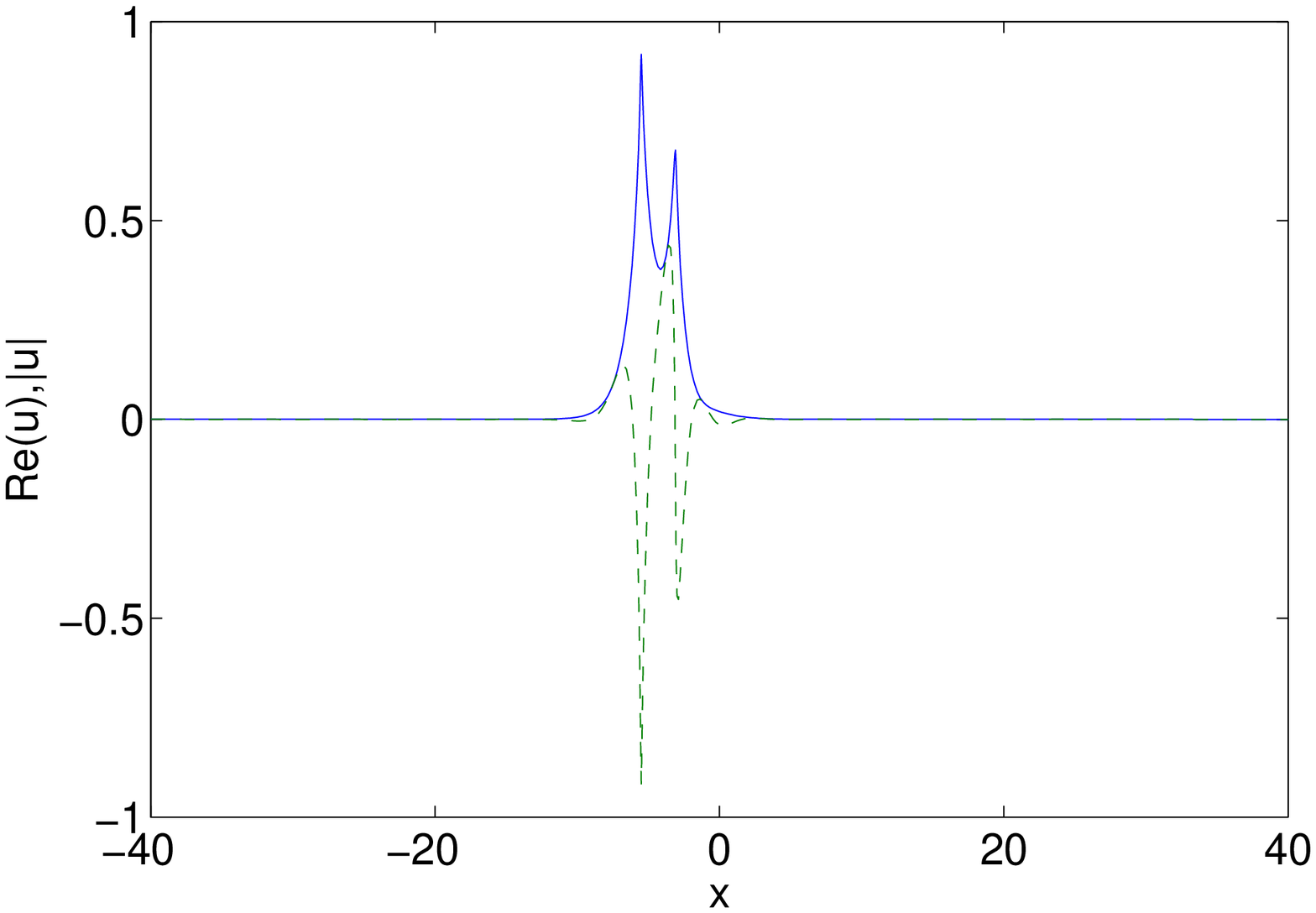}\quad
\includegraphics[scale=0.35]{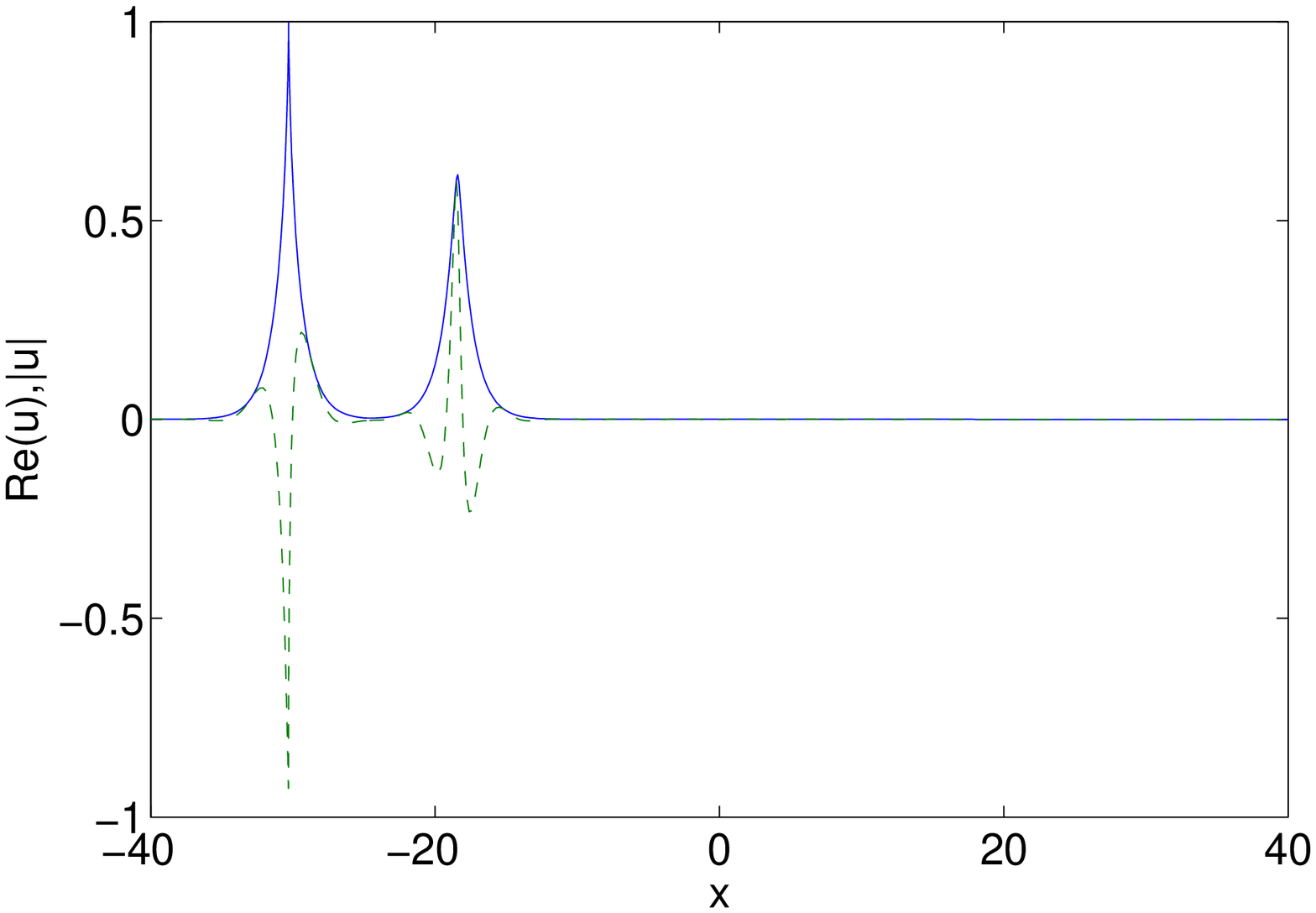}} \kern-0.3\textwidth
\hbox to
\textwidth{\hss(c)\kern0em\hss(d)\kern5em} \kern+0.25\textwidth
\caption{Two soliton solution for the semi-discrete complex short pulse
equation; solid line: $|u|$, dashed line: $Re(u)$. (a) $t=0$; (b) $t=45.0$;
(c) $t=60.0$; (d) $t=110.0$.}
\label{figure:smoothcupon}
\end{figure}
\section{Concluding Remarks}
We have derived an integrable semi-discrete analogue of the multi-component short pulse equation
proposed by Matsuno \cite{Matsuno_CSPE} based on a B\"acklund transform and Hirota's bilinear method.
We find its $N$-soliton solution in terms of pfaffians and prove it. Moreover, a complex short pulse equation, which possess smooth, cuspon or loop type envelop soliton, is proposed and its
semi-discrete analogue is constructed as well. We conclude the present paper by the
following remarks.

\begin{itemize}
\item The $N$-solution for multi-component short pulse equation given in the
present paper agrees with the one given by Matsuno in \cite{Matsuno_CSPE}.
This solution is a benchmark for the study of soliton interactions.

\item Similar to our previous results \cite{dCH,dCHcom,SPE_discrete1}, the
semi-discrete multi-component short pulse equation proposed here can be
served as an integrable numerical scheme, the so-called self-adaptive moving
mesh method, for the numerical simulation of multi-component short pulse
equation, as well as complex and coupled complex short pulse equation.
However, it is obviously beyond the scope of the present paper, we would
like to report our results in this aspect in a forthcoming paper.

\item The integrable fully discretization of the multi-component short pulse
equation is a further topic deserve to study.
\end{itemize}

\section*{Acknowledgments}
This work is partially supported by the National Natural Science Foundation of China (No. 11428102).

\end{document}